\newtheorem{theorem}{Theorem}[section]
\newtheorem{lemma}[theorem]{Lemma}
\newcommand{\eat}[1]{}
\title[Analysis of a Learning Based Algorithm for Budget Pacing]{Analysis of a Learning Based Algorithm for Budget Pacing}
\author{MohammadTaghi Hajiaghayi}
\affiliation{
  \institution{University of Maryland}
  \city{College Park}
  \country{USA}}
\email{hajiagha@umd.edu}
\author{Max Springer}
\affiliation{
  \institution{University of Maryland}
  \city{College Park}
  \country{USA}}
\email{mss423@umd.edu}
\begin{abstract}
In this paper, we analyze a natural learning algorithm for uniform pacing of advertising budgets, equipped to adapt to varying ad sale platform conditions. On the demand side, advertisers face a fundamental technical challenge in automating bidding in a way that spreads their allotted budget across a given campaign subject to hidden, and potentially dynamic, cost functions. This automation and calculation must be done in runtime, implying a necessarily low computational cost for the high frequency auction rate. Advertisers are additionally expected to exhaust nearly all of their sub-interval (by the hour or minute) budgets to maintain budgeting quotas in the long run. To resolve this challenge, our study analyzes a simple learning algorithm that adapts to the latent cost function of the market and learns the optimal \textit{average} bidding value for a period of auctions in a small fraction of the total campaign time, allowing for smooth budget pacing in real-time. We prove our algorithm is robust to changes in the auction mechanism, and exhibits a fast convergence to a stable average bidding strategy. The algorithm not only guarantees that budgets are nearly spent in their entirety, but also smoothly paces bidding to prevent early exit from the campaign and a loss of the opportunity to bid on potentially lucrative impressions later in the period. 

In addition to the theoretical guarantees, we validate our algorithm with experimental results from open source data on real advertising campaigns to further demonstrate the effectiveness of our proposed approach.
\end{abstract}
\keywords{budget pacing, auction systems, mechanism design, learning algorithm}
\newcommand{\BibTeX}{\rm B\kern-.05em{\sc i\kern-.025em b}\kern-.08em\TeX}
\begin{document}


\pagestyle{fancy}
\fancyhead{}


\maketitle 


\section{Introduction}
\label{sec.intro}
Ad impressions sold through real-time bidding (RTB) auctions are responsible for an ever-increasing portion of company expenditures as well as the revenue of large ad providers like Google, Facebook, Amazon, Microsoft, and Yahoo!. Google alone is responsible for upwards of 50 billion ad impressions on average per day \cite{Tunuguntla_2021}, with a corresponding revenue on the order of \$100 million. While companies bidding in these advertising campaigns do not participate in every auction throughout the day, they are often required to participate enough to spend their allotted budget in this time. Thus, great interest is placed on adequately pacing budgets throughout the day so as to not spend too hastily at the beginning of a day and miss out on better impressions, or spend too frugally until the close of day. 

We here take the perspective of a demand-side platform (DSP), an intermediary serving as the interface connecting advertisers with
ad-exchanges and offering customized bid strategies in online auctions. Our model of a DSP is given a users budget and target spent amount, and is asked to uniformly pace spending so as to exhaust this amount.


In optimizing bids within an RTB, an advertiser hopes to maximize  key performance indicators, such as cost per conversion (CPC) or click through rate (CTR), for an ad campaign subject to the budgeting constraints for each time period \cite{Cai_2017,Jauvion_2020,lobos2018optimal,Zhang_2014}. By smoothly spreading a budget across the duration of a campaign, an advertiser can ensure sustained influence across the broad spectrum of ad impressions with which they interact.
We can summarize the objective of an advertiser using a DSP as follows:
\begin{itemize}[leftmargin=*]
    \item \textbf{Achieve performance goals:} in an advertising campaign, performance goals are typically meant to ensure a broad audience is reached while maintaining a low CPC, ie. the investment required to acquire new customers is relatively small.
    \item \textbf{Optimally pace the budget:} an advertiser hopes to have a sustained influence in an ad campaign, smoothly rolling out their advertisements to ensure impact on users. This budget pacing can be uniform (equal representation throughout a campaign) or potentially more sophisticated so as to maximize the number of users reached during certain times of the day, while placing less priority on other periods.
\end{itemize}

In spite of the ever increasing prevalence of online advertising in modern society and the clear value of optimizing the above objectives in practice, the research literature on the problem instance of optimally spending budgets in their entirety is limited. The bulk of the work either excludes comprehensive analysis of the stability of a proper budget pacing mechanism in favor of simulating these results \cite{Cai_2017,lee2013,Xu_2015}, or does not impose the realistic constraints that are met in the advertising economy \cite{tapia_2015,lobos2018optimal}. As such, we bring an analytic approach in combination with simulated results on real datasets of our pacing mechanism to bridge this gap in the literature.

\subsection{Main Ideas}
\label{sec.intro_main}

In this paper, we present an online approach to smoothly and uniformly pacing an advertiser's budget over a fixed-length advertising campaign. Our approach invokes an iterative control feedback mechanism to estimate the proper \emph{average} bid to submit over each time period, which is later manipulated by a mechanism (potentially hidden to the bidder) to compute an ``actual spent amount" for that period so that the total budget is approximately spent in a uniform fashion throughout the campaign. The algorithm relies on simplistic scaling of bids in response to learning of this latent mechanism in a naturalistic way, and is currently in implementation at 
two major companies with which the authors were previously affiliated. Specifically, we analytically derive the stability and convergence time of this natural learning algorithm that has analogously appeared in other online bidding literature \cite{lee2013,Zhang_2014}. Though simple, this algorithm exhibits low convergence times for the problem of budget pacing and possesses interesting dynamics in accordance with other one-dimensional mappings like the logistic map. We summarize our main results informally:

\begin{itemize}[leftmargin=*]
    \item Develop an iterative control feedback algorithm for smart pacing of an advertisers budget through an ad campaign.
    \item Analytically derive the efficiency of this algorithm in quickly converging to an optimal bid strategy to ensure budgeting quotas are met.
    \item Implement our algorithm with extensive simulations on real ad campaigns from Yahoo! to demonstrate its feasibility in practice.
\end{itemize}


The remainder of the paper is organized as follows. In Section \ref{sec.back}, we discuss the budgeting problem on a high level and previous work in the field. In Section \ref{sec.opt}, we describe our model and give a comprehensive analysis of the convergence dynamics exhibited. We discuss extensions of our algorithm to account for other realistic budgeting schemes to be implemented by a DSP in Section \ref{sec.realworld}, and finally present simulations of our algorithm on real datasets in Section \ref{sec.simul} to demonstrate its utility in practice. We conclude with a discussion of our methodology, its limitations, and where future work will be directed.

\eat{
In Section \ref{sec.simul}, we compare the theoretical insights with simulated results and further discuss the extensions of our implementation to capture different pacing schemes in Section \ref{sec.realworld}. Lastly, in Section \ref{sec.conc}, we conclude with a discussion of our methodology, its limitations, and where further work is needed.
}

\section{Background and Related Work}
\label{sec.back}

In this section, we discuss the problem of budget pacing with the joint problem of bid optimization, and subsequently discuss the related work, noting where these results fall short of the reality of RTB schemes. 

\subsection{Problem Statement}
\label{sec.back_prob}
We consider \emph{online bid optimization} in the following framework: there are $T$ auction periods ordered by an index $t \in \{1, ..., T\}$. At the start of an auction period, an individual advertiser must make a decision as to how much to bid for desired ads within this period, denoted by $b_t$. Advertisers further have a total daily budget $B \in \mathbb{R}$ that limits the amount that can be spent within the day. Typically, advertisers would like to have smooth budget delivery \cite{Agarwal_2014,Avadhanula_2021,Balseiro_2019,Balseiro_2021,Conitzer_2021,lee2013,Xu_2015}, expressed as not buying more than a set fraction of the impressions before a given time in order to ensure that:
\begin{enumerate}
    \item budgets are not prematurely expended, thus resulting in missed opportunities later in the day
    \item spending does not fluctuate substantially for ease of analysis.
\end{enumerate}

As a simple example, assume that each $t$ denotes a 2 minute ad period on a streaming website. Within this time frame, advertiser $A$ may choose to bid with value $b_t$ in auctions against all advertisers interested in showing an ad to a given subset of users. Within a split-second, the highest bidder for each user is chosen and $A$ must pay for the auctions they won, incurring some cost $$c_t = b_t \times \text{ (number of auctions won)}.$$ 
Following observation of this incurred cost, the advertiser $A$ invokes an algorithm to adjust their bidding in an effort to ensure a smooth depletion of their budget throughout the day.

\eat{
We consider an online bid optimization problem subject to budgeting constraints in the following way: throughout a given campaign, there are $T$ total auction periods being conducted sequentially (indexed by $t \in \{0,2,...,T-1\}$). For example, these periods could correspond to each minute in a day, which would amount to $\approx 10,000$ auctions in each period. An advertiser must decide how much to bid on average for impressions during each period of the campaign, denoting their average bid as $b_t$ for the $t$-th period.
}

We note that for each impression, the base bid $b_t$ can be rescaled in a manner corresponding to the value of the impression for that advertiser. For example: as formulated in \cite{Perlich_2012}, the bid value of the $i$-th impression is rescaled by the probability of its conversion $p_i$, divided by the average conversion (denoted $\overline p$). This is often done for individual impressions of great interest to an advertiser, however, in our formulation, $b_t$ is submitted \textit{on average} in each  auction period. 

\eat{
We assume an advertiser has a budget $B > 0$ for the $T$ auctions. The key aspect of real-time bidding in a finite time period is that each advertiser is trying to both deliver their budget smoothly and spend it nearly entirely \cite{Agarwal_2014,Avadhanula_2021,Balseiro_2019,Balseiro_2021,Conitzer_2021,lee2013,Xu_2015}.
}

The added complication of the smooth delivery problem we focus on in the present study is that oftentimes, when an advertiser submits a bid in an auction, this may not be the true value they pay for that impression. As demonstrated in the simple example above, although $A$ submits a bid $b_t$, they actually incur a cost that is potentially much larger (proportional to the number of auctions won).

\eat{
To reiterate this novel context, consider the simplistic instance where the advertising campaign is conducted using a first price auction. The amount spent in each time period is thus the proportion of auctions won multiplied by the submitted bid and thus the underlying actual spent amount function is a linear scaling of the submitted bids and generalizes the expenditure across hundreds of thousands of auctions. 
}
In response, the pacing of an advertiser's budget relies on learning the actual cost of submitted bids to adequately scale bid values to meet desired spending goals in each period. 

We can now formally frame the budget pacing problems as:
\begin{align}
    \textbf{minimize } &B - \sum_{t=1}^T c_t \\
    \textbf{subject to } &\left|\frac{B}{T} - c_t\right| \leq \epsilon  
\end{align}
for some $\epsilon \ll 1$ and where the optimization is framed using the information for all the auctions, however, the problem itself is online. As a result, it is clear we need an algorithm that quickly learns the latent cost function, $f(b_t) = c_t$, in a small portion of the total number of auctions and subsequently uniformly paces their bidding for the duration of the campaign. Of crucial importance, we note that the cost function in each period implicitly contains information on the number of impressions won and overall surplus generated based on the value of impressions won, subject to the payment rules of whichever auction mechanism is implemented for those impressions. 

Lastly, it is important to note that the cost function, $f$, is \emph{subject to change} at different times in a campaign and can be adjusted by the advertiser when manipulating their target spend amount for different times. For example, if the algorithm is being utilized to exhaust a daily budget, the advertiser may emphasize impressions in the morning and input a higher target spend amount for that portion of the day, while reducing this target for later times in the day. With slight modifications of our algorithm we can further capture these instances (see Section \ref{sec.realworld}).

\subsection{Related Work}
\label{sec.back_rel}

Real-time bidding strategies have an expansive literature \cite{Celli_2021,Feng_2018,Geyik_2015,Karaca_2019,Nedelec_2019,Weed_2016}, however the field of optimal budget pacing is relatively new and, as such, the literature thus far is often limited in scope \cite{Agarwal_2014,Balseiro_2019,Balseiro_2021,borgs_2007}. We here examine the methodologies of a select number of other studies whose work closely aligns with our own, the aspects of real-time bidding they encompass, and where they fall short. 

Most closely related to our work is that of Lee et al. \cite{lee2013}, which devises a comparable algorithm for pacing the budget while also maximizing the performance indicators of impressions. In particular, their algorithm scales bids between successive rounds to ensure smooth delivery of the budget without early exit. However, this paper maximizes impression value by estimating a threshold for advertisers to begin bidding. As a result, the algorithm bids more highly on the valuable impressions, but risks overspending and early exit from the campaign in these instances. Our algorithm instead ensures that agents are present throughout the entire campaign so that no impression is missed out on and guarantees sustained presence in an ad space. Additionally, the work of Lee et al. does not account for the real-world quotas that advertisers must meet, forcing them to spend their budgets in close to their entirety. Our algorithm accounts for this constraint and utilizes it in the scaling of bid values. 

The work done by Xu et al. \cite{Xu_2015} also aligns well with our study, emphasizing smooth delivery of budget and minimizing the probability of early exit from a campaign to ensure a sustained presence for advertisers. When omitting performance goals, their algorithm is comparable to ours as it optimizes bidding by scaling bids based on the discrepancy between desired spending amount and the actuality. Additionally, they optimize over a fixed penalty function, similar to our latent cost function. However, their algorithm is not robust to changes in this penalty over time and fails to provide analytic results on the convergence and stability of this algorithm. Our study attempts to bridge the gap between the analytic and computational studies on optimal budget pacing. Lastly, this paper fails to consider the real-world constraint of budgeting quotas. 

Fernandez-Tapia \cite{tapia_2015} examines the pacing problem from  different perspective to the above, deriving analytical results through variational calculus techniques. This work compares the optimal bidding strategy when RTBs occur at a linear rate (fixed spacing between bids) and nonlinear. As most papers, including our own, make the simplifying assumption that auctions occur at such a high rate that we need not consider unequal arrival of the impressions to bid on, this more technical study highlights a more complex and realistic setting of auctions. However, while impressions may arrive in this more complicated fashion, advertisers do not in practice base their strategies around this assumption. As such, our algorithm is better suited for real-world implementation.

The problem of finding an optimal bidding strategy for an advertiser who does not know their own valuation, subject to budgeting constraints, has also been studied as an extension of the multi-armed bandit problem as in \cite{Avadhanula_2021}. Avadhanula et al. provide bidding schemes over discrete and continuous bid spaces for $m$ platforms simultaneously with regret lower bounds of $\Omega \left( \sqrt{mOPT} \right)$ and $\Omega \left( m^{1/3}B^{2/3} \right)$ -- where $B$ is the budget and $OPT$ the performance of the optimal bidding strategy. This model however is limited in that the auction format is restricted to a second-price auction payout mechanism, whereas our approach provides a more general framework which can be applied to any auction mechanism. 

\section{Optimal Bidding}
\label{sec.opt}
In this section, we detail the algorithm for learning the latent cost function and uniformly pacing an advertiser's budget throughout an advertising campaign. We first present the algorithm with the intuition behind its definition and proceed by analyzing the convergent dynamics and efficiency. We emphasize uniformly pacing an advertiser's budget, or target spend amount, so that all auction periods are weighted equally. The primary concern is to spend the entirety of their budget while staying active in the campaign for as long as possible. 

We define the metrics by which a budget pacing algorithm should be measured in order to assess its overall effectiveness in online real-time bidding systems. 
\begin{enumerate}
	\item Fast convergence to a stable bidding strategy relative to the overall campaign duration.
	\item Adaptive capacity for changes in the display auction mechanism (manipulated on the supplier end).
\end{enumerate}

Ensuring good performance metrics is complex in the online market due to the speed and frequency that impressions are sold, yet we prove in this section that our natural learning algorithm performs well due in large part to its simplicity and efficient computation.

\subsection{Pacing Algorithm}
\label{sec.opt_algo}

We begin by assuming an advertiser has two pieces of information: its budget, $B$, and the number of auction periods in the campaign, $T$ (ie. auction periods per day). Intuitively, the advertiser who is trying to uniformly pace their budget will initially bid its average budget for the corresponding number of auctions, $b_0 = \frac{B}{nT}$ (where $n$ is the number of auctions per period). However, in general, the actual cost is much larger than the input bid and the convergence time remains low regardless of this initial bid selection, so any choice suffices. Once an initial average bid is set, it is utilized for the first period ($t=0$), once again being scaled for each impression in accordance with their value, and the agent incurs cost $c_0$ throughout this time. 

Following the initial bid, an advertiser now has the information of how much was spent in the first period and can assess the discrepancy between the \emph{desired} amount to be spent and the \emph{actual} amount. Let $B_r^t$ denote the remaining budget after period $t$, and $c_{t}$ be the incurred cost in this period, then we can define the scaling factor $\frac{B_r^t}{T-t} \cdot \frac{1}{c_t}$ as the ratio of the amount the advertiser wants to spend on average in each remaining auction period to how much it spent on the previous. Using this ratio as a budget pacing factor, we have the following iterative scheme:
\begin{align*}
    b_{t+1} = \frac{B_r^t}{c_t (T-t)} \cdot b_t
\end{align*}

\begin{algorithm}[t]
\caption{Budget Smoothing}
\label{alg:algorithm}
\begin{algorithmic}[1]
\STATE \textbf{Input}: $B, T, t, b_t, (c_0, ..., c_t)$ \\
\STATE \textbf{Output}: $b_{t+1}$
\STATE $B_r^t = B - \sum_{i=0}^t c_i$ \COMMENT{remaining budget after time $t$}
\STATE $c_{\text{opt}} = \frac{B_r^t}{T-t}$ \COMMENT{optimal spend amount}
\STATE $c_{\text{act}} = c_t$ \COMMENT{actual spend amount}
\STATE $\alpha = \frac{c_{\text{opt}}}{c_{\text{act}}}$
\STATE \textbf{return} $\alpha \cdot b_t$
\end{algorithmic}
\end{algorithm}

\textsc{Algorithm} \ref{alg:algorithm} takes as input the budget ($B$), number of periods in a campaign ($T$), current period ($t$), the most recent bid value ($b_t$), and the history of previously spent amounts ($\{c_i\}_{i=0}^t$), outputting a new average bid value scaled based on this information. Note that each $c_i$ is equal to $f(b_i)$, however the algorithm is trying to \emph{learn} $f$, so we use the former notation.

As we will see in Section \ref{sec.opt}, this intuitive update algorithm converges to a fixed point that uniformly paces an advertisers budget within a small fraction of the total auction time, as well as undergoing interesting bifurcations when parameters of the system are tuned. A brief summary of the results for latent cost functions in the family $f(b_t) = C \cdot b_t^k$, where $k \in \mathbb{R}$ and $C \in \mathbb{R}_{+}$, are provided in Table \ref{tab:1}. 

\begin{table*}[t]
\center
\begin{tabular}{c|c}
\textbf{Parameter Range} & \textbf{Dynamics}                                     \\ \hline
$k \leq 0$               & Unstable fixed point                                  \\ \hline
$k \in (0,1)$ & Stable fixed point with convergence time $\propto \frac{\ln(k)}{\ln(1-k)}$   \\ \hline
$k = 1$                  & Stable fixed point with convergence in one iterations \\ \hline
$k \in (1,2)$ & Stable fixed point with convergence time $\propto \frac{\ln(2-k)}{\ln(k-1)}$ \\ \hline
$k \geq 2$               & Instability requiring guard rails on spending        
\end{tabular}
\caption{Summary of convergence results for cost functions of the form $f(b_t) = C \cdot b_t^k$.}
\label{tab:1}
\end{table*}

In analyzing the efficiency of this learning algorithm, we consider various reasonable actual cost functions $f(b_t) = c_t$ and their divergent or convergent behavior to an optimal strategy. In the following analysis we consider only \textit{continuous} cost functions, and as such simplify our analysis by examining polynomial functions which can approximate any such function -- a consequence of the Stone-Weierstrass theorem \cite{rudin}.

\subsection{Linear and Semi-Linear Cost Function}
\label{sec.opt_lin}
As a warm-up, we first consider the most simplistic form of the latent cost function: a \textit{linear} cost. We assume that $f(b_t) = C \cdot b_t$ for some $C > 0$: the situation in which the agent running the ad sale simply scales incoming bids by a fixed constant factor. We reiterate that such a simple scaling can effectively capture the dynamics of a repeated first auction system wherein the actual amount spend by a bidder is proportional to the number of auctions won in a given period.

In this elementary case, we see that, for any initial bid placed an advertiser adapts to the latent cost function in \textit{exactly} one iteration, bidding the optimal uniform amount for the remainder of the campaign.
\begin{theorem} \label{thm:linear}
For a linear cost function, $f(b_t) = C \cdot b_t$ where $C > 0$, and initial bid $b_0$, \textsc{Algorithm} \ref{alg:algorithm} converges to a fixed point bid value in exactly one iteration.
\end{theorem}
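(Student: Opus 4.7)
The plan is to substitute the linear cost $c_t = C\cdot b_t$ directly into the update rule from \textsc{Algorithm} \ref{alg:algorithm} and track the bid sequence explicitly for one step to show that the scaling factor $\alpha$ becomes identically $1$ from iteration two onward. Writing out $b_{t+1} = \frac{B_r^t}{c_t(T-t)} b_t$ with $c_t = Cb_t$ causes a convenient cancellation of the $b_t$ factor, yielding $b_{t+1} = \frac{B_r^t}{C(T-t)}$. So the first move is to specialize to $t=0$ and record the closed form $b_1 = \frac{B - Cb_0}{CT}$ for an arbitrary starting bid $b_0$.

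Next I would verify that $b_1$ is a fixed point of the update by computing $c_1 = C b_1 = \frac{B-Cb_0}{T}$, and then $B_r^1 = B_r^0 - c_1 = (B-Cb_0) - \frac{B-Cb_0}{T} = \frac{(T-1)(B-Cb_0)}{T}$. Plugging these into line 4 of the algorithm gives $c_{\text{opt}} = \frac{B_r^1}{T-1} = \frac{B-Cb_0}{T} = c_1$, so $\alpha = 1$ and $b_2 = b_1$. A one-line induction on $t \geq 1$ — assuming $b_t = b_1$ and $B_r^{t-1} = (T-t+1)c_1$ — then shows $B_r^t = (T-t)c_1$ and hence $\alpha = 1$ again, so $b_{t+1} = b_t$ for all $t \geq 1$.

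Because $b_1$ is attained after exactly one application of the update rule and remains unchanged thereafter, this establishes the theorem. There is no real obstacle to the argument: the linearity of $f$ makes the $b_t$ in numerator and denominator cancel in one step, which is exactly why the convergence is immediate. The only place to exercise care is the indexing of $B_r^t$ versus $c_t$ (the remaining budget is computed \emph{after} subtracting $c_t$, not before) so that the inductive step lines up correctly with the statement $b_{t+1} = b_t$.
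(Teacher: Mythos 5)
Your proof is correct, and its core computation is the same cancellation the paper relies on; the difference is organizational. The paper never proves the linear case directly: it proves the semi-linear case (Theorem \ref{thm:semi_lin}, with cost $\min\{Cb_t, M\}$) by explicitly computing $b_1 = \frac{B-Cb_0}{CT}$ and verifying $b_2 = b_1$, and then obtains Theorem \ref{thm:linear} as the limiting case $M \to \infty$, where the $\min$ disappears. You instead specialize to the linear cost from the start, which makes the cancellation of $b_t$ in $\alpha \cdot b_t$ cleaner and avoids carrying the $\min\{\cdot,M\}$ bookkeeping. Your proof is also more complete in one respect: the paper stops at $b_2 = b_1$, leaving implicit why the bid stays fixed for \emph{all} subsequent periods, whereas your induction (maintaining $B_r^{t-1} = (T-t+1)c_1$ so that $c_{\mathrm{opt}} = c_{\mathrm{act}}$ and $\alpha = 1$ at every later step) makes that persistence explicit — this matters because the fixed point here is of a time-inhomogeneous map (both $B_r^t$ and $T-t$ change each period), so the constancy of the ratio $B_r^t/(T-t)$ is genuinely something to check. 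What you lose relative to the paper's route is reusability: the paper's semi-linear computation does double duty, giving both theorems at once, while your argument covers only the linear case.
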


In many practical instances, it is also important to consider a \textit{semi-linear} cost function. We define the semi-linear cost function as $f(b_t) = \min\{C \cdot b_t,M\}$\footnote{Additionally, if we use the semi-linear function with a maximum rather than minimum, we see identical convergence rates.} where both $C,M > 0$. This is analogous to when the administration of the ad sales both scale an advertisers bids by a fixed constant factor and also instate "guard rails" on the spending: requiring the amount spent to be no greater than $M$. This is meant to protect against over excessive spending by an advertiser. Despite this limitation on bids, we see that the semi-linear function converges in one iteration when $M > \frac{B}{cT}$.

\begin{theorem} \label{thm:semi_lin}
For a semi-linear cost function, $f(b_t) = \min\{C \cdot b,M\}$  where $C > 0$, initial bid $b_0 < M $, and $M  > \frac{B}{CT}$ \textsc{Algorithm} \ref{alg:algorithm} converges to a fixed point average bid value in exactly one iteration.
\end{theorem}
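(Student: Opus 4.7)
The plan is to reduce the semi-linear case to the linear case of Theorem \ref{thm:linear} by verifying that a single update of Algorithm \ref{alg:algorithm} already lands the bid in the unsaturated region where $f(b) = Cb$. Once that is established, the one-iteration argument of Theorem \ref{thm:linear} applies verbatim to show that $b_1$ is a fixed point.

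I would begin by computing $c_0 = f(b_0)$ via case analysis. If $Cb_0 \le M$, then $c_0 = Cb_0$ and the update in line 7 of Algorithm \ref{alg:algorithm} yields
$$b_1 \;=\; \frac{B_r^0}{c_0 T}\,b_0 \;=\; \frac{B - Cb_0}{CT},$$
which is strictly less than $B/(CT)$. If instead $Cb_0 > M$, so $c_0 = M$, the update yields $b_1 = (B - M)b_0/(MT)$, and I would use the hypothesis $b_0 < M$ to bound $b_1$. In either sub-case, I would then verify that the cap is inactive at $b_1$, i.e.\ $Cb_1 \le M$, so that $c_1 = Cb_1$. The hypothesis $M > B/(CT)$, equivalently $MCT > B$, says exactly that the per-period cap is loose enough to accommodate the uniform spend target the algorithm is aiming for, and combining it with $b_0 < M$ rules out saturation at $b_1$.

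With $b_1$ lying in the unsaturated region, the calculation from the proof of Theorem \ref{thm:linear} immediately gives $c_1 = C b_1 = (B - c_0)/T$, which coincides with the updated optimal per-period target $c_{\text{opt}}^{(1)} = B_r^1/(T-1)$. Hence the scaling factor $\alpha$ at $t = 1$ equals $1$, so $b_2 = b_1$, and a trivial induction shows $b_{t+1} = b_1$ for every $t \ge 1$.

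The main technical obstacle is the cap-checking step in the saturated sub-case: showing that the multiplicative contraction from $b_0$ to $b_1 = (B-M)b_0/(MT)$ is strong enough to pull the bid back into the unsaturated region requires a careful combined use of both hypotheses $b_0 < M$ and $M > B/(CT)$. Once that inequality is nailed down, the remainder of the argument is a mechanical repetition of the one-step fixed-point calculation from Theorem \ref{thm:linear}.
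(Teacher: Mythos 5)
Your sub-case 1 (the cap inactive at $b_0$, so $c_0 = Cb_0$) is precisely the paper's proof: the factor $b_0$ cancels, $b_1 = (B-Cb_0)/(CT)$, and the $t=1$ update has scaling factor $1$, giving $b_2 = b_1$. The genuine gap is your sub-case 2, $Cb_0 > M$. What you call the ``main technical obstacle'' there is not an inequality waiting to be nailed down --- the claim is false in that branch. The linear argument works only because $c_0$ is proportional to $b_0$, so that $b_1 = \frac{B-c_0}{c_0 T}\,b_0 = \frac{B-c_0}{CT}$, which is exactly the value that makes the next scaling factor equal $1$. When the cap binds, $c_0 = M$ and $b_1 = \frac{(B-M)b_0}{MT}$ retains the factor $b_0$, while the fixed point of the $t\ge 1$ update is $\frac{B-M}{CT}$; equating the two forces $b_0 = M/C$, i.e.\ the boundary of the case. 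A concrete counterexample satisfying all stated hypotheses: $B=100$, $T=10$, $C=5$, $M=10 > B/(CT)=2$, $b_0 = 2.1 < M$. Then $c_0 = \min\{10.5,10\} = 10$, $b_1 = 1.89$, and the cap is even inactive at $b_1$ ($Cb_1 = 9.45 < M$), yet
\begin{align*}
b_2 \;=\; \frac{100 - 10 - 9.45}{9.45\cdot 9}\cdot 1.89 \;\approx\; 1.79 \;\neq\; 1.89 = b_1,
\end{align*}
so no combined use of $b_0 < M$ and $M > B/(CT)$ can close this branch.

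The resolution is that this branch is not meant to exist: the hypothesis ``$b_0 < M$'' in the theorem is (loosely) intended to say the initial bid is unsaturated, i.e.\ $c_0 = Cb_0$, and the paper's proof simply drops the $\min$ in its first line with no case analysis at all. Read that way, your sub-case 2 is vacuous and your sub-case 1 is the entire proof, matching the paper verbatim. You should also be aware that your cap-check at $b_1$ in the unsaturated branch has the same calibration problem as the theorem statement itself: $Cb_1 = \frac{B-Cb_0}{T} < \frac{B}{T}$, and the stated hypothesis $M > \frac{B}{CT}$ implies $M > \frac{B}{T}$ only when $C \le 1$; the hypotheses really function as intended only if one reads them as $Cb_0 < M$ and $M > B/T$ (or sets $C=1$), a sloppiness the paper inherits too since its proof silently evaluates $\min\{Cb_1, M\} = Cb_1$.
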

Additionally, for $M \gg 1$, the semi-linear function is identical to the linear case. As a result, proving Theorem \ref{thm:semi_lin} encompasses the result of Theorem \ref{thm:linear}.
\begin{proof}[Theorem \ref{thm:semi_lin}]
	Assume $b_0 \neq b^*$:
	\begin{align*}
	b_1 &= \text{\textsc{Algorithm} 1}(B,T,1,b_0,c_0) \\
	&= \frac{B - \min\{Cb_0,M\}}{\min\{Cb_0,M\}T} \cdot b_0 \\
	&= \frac{B - Cb_0}{cT} \\
	b_2 &= \text{\textsc{Algorithm} 1}(B,T,2,b_1,(c_0,c_1)) \\
    &= \frac{B - \min\{Cb_0,M\} - \min\{cb_1,M\}}{\min\{cb_1,M\}(T-1)} \cdot b_1 \\
    &= \frac{B - Cb_0 - \min\{C \cdot \frac{B - Cb_0}{CT},M\}}{\min\{C \cdot \frac{B - Cb_0}{cT},M\}(T-1)} \cdot \frac{B - Cb_0}{CT} \\
    &= \frac{B - Cb_0}{CT} \\
    &= b_1
\end{align*}
\end{proof}

\begin{proof}[Theorem \ref{thm:linear}]
	Using the result of Theorem 3.2 we take the limit as $M \rightarrow \infty$, thus giving $\min\{cb,M\} = cb$ with convergence in one iteration. 
\end{proof}

Before proceeding to give theoretical bounds on the convergence times of our algorithm for the cases where we do not see one iteration convergence, we must first define the ``general cost function" and assess the fixed points of our system.

\subsection{General Cost Function}
\label{sec.opt_gen}

In the most complex case, we consider a general cost function of the form $f(b_t) = C \cdot b_t^k$ where $k \neq 1$. In this case, we here identify the fixed point of the system and assess its stability. 

\begin{theorem} \label{thm:stable}
For $k \in (0,2)$, Algorithm 1 has exactly one stable fixed point for cost functions of the form $f(b_t) = C \cdot b_t^k$ where $C > 0$.
\end{theorem}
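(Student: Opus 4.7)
The plan is to reduce the non-autonomous iteration from Algorithm 1 to a one-dimensional autonomous map and then apply the standard linearization criterion for stability. Specifically, I would first argue that in the regime where the budget is being paced uniformly the ratio $B_r^t/(T-t)$ tracks the target per-period spend $B/T$, so the effective update rule becomes
\begin{equation*}
g(b) \;=\; \frac{B/T}{C\, b^{k}} \cdot b \;=\; \frac{B}{CT}\, b^{1-k}.
\end{equation*}
This reduction is justified since, at any genuine fixed point $b^\ast$ of the full iteration, the per-period cost is constant and equal to $c^\ast = C(b^\ast)^k$, forcing $B_r^t = B - t c^\ast$ and hence $B_r^t/(T-t) = c^\ast$ exactly, so the fixed points of the full iteration coincide with those of $g$.

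Next I would solve for the fixed points of $g$. Setting $b^\ast = g(b^\ast)$ yields the algebraic condition $(b^\ast)^k = B/(CT)$, which for any $k \in (0,2)$ (in particular $k \neq 0$) admits the unique positive solution
\begin{equation*}
b^\ast \;=\; \left(\frac{B}{CT}\right)^{1/k}.
\end{equation*}
This already establishes existence and uniqueness of the fixed point of the map on $\mathbb{R}_+$, and hence settles the ``exactly one'' portion of the statement (any other stable point would in particular be a fixed point).

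For stability, I would compute the derivative of $g$ and evaluate it at $b^\ast$:
\begin{equation*}
g'(b) \;=\; \frac{B}{CT}\,(1-k)\, b^{-k}, \qquad g'(b^\ast) \;=\; \frac{B}{CT}(1-k)\cdot \frac{CT}{B} \;=\; 1-k.
\end{equation*}
Since $k \in (0,2)$ gives $|1-k| < 1$, the standard hyperbolic fixed-point criterion for one-dimensional smooth maps implies that $b^\ast$ is (locally) asymptotically stable. Combined with uniqueness, this yields the theorem.

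The main obstacle I anticipate is making the passage from the non-autonomous iteration (with $B_r^t$ and $T-t$ both drifting) to the autonomous map $g$ fully rigorous rather than heuristic. A clean way to discharge this is to observe that the substitution $\tilde b_t := b_t$ with target-rate normalization $B_r^t/(T-t) \to B/T$ is exact at any fixed point, and that a small-perturbation analysis around $b^\ast$ shows the two iterations agree to first order, so the linearized stability conclusion transfers. The remaining pieces---existence, uniqueness, and the derivative calculation---are routine once this reduction is in place.
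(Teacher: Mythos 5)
Your core strategy---locate the fixed point, linearize, and check that the multiplier has modulus less than one---is the same as the paper's, and your multiplier $1-k$ matches the paper's conclusion. The weak point is exactly the step you flagged: the passage to the autonomous map. Your claim that the full iteration and $g(b) = \frac{B}{CT}\,b^{1-k}$ ``agree to first order'' around $b^\ast$ is not correct as stated. A perturbation $\epsilon$ in $b_t$ perturbs the current cost $c_t = Cb_t^k$, which in turn perturbs the remaining budget $B_r^t$ entering the very same update; this feedback contributes a genuine first-order term of size $-k\epsilon/(T-t)$ that your reduction drops. The paper keeps this term by differentiating the full expression
\begin{align*}
\frac{d}{db_t}\left(\frac{B-\sum_{i\le t} Cb_i^k}{Cb_t^k(T-t)} \cdot b_t\right)
\end{align*}
and obtains the exact multiplier $\lambda = 1 - k\left(\frac{T-t+1}{T-t}\right)$ (eq.\ (3.3.4)), which equals your $1-k$ only in the limit $T-t \to \infty$. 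The conclusion survives for fixed $k \in (0,2)$ and $t \ll T$, which is the regime the paper explicitly restricts to; but near $k=2$ with $T-t$ small, the true multiplier exceeds $1$ in modulus even though $|1-k|<1$, so your argument as written overstates the stability region. To repair it, either carry out the linearization on the full non-autonomous update (as the paper does) or state the $t \ll T$ approximation explicitly as the hypothesis under which your autonomization is valid.

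A smaller issue: your budget accounting has an off-by-one. By line 3 of Algorithm 1, $B_r^t = B - \sum_{i=0}^{t} c_i$, so at a putative fixed point $B_r^t = B - (t+1)c^\ast$, and the condition $B_r^t/(T-t) = c^\ast$ forces $c^\ast = B/(T+1)$, giving $b^\ast = \left(\frac{B}{C(T+1)}\right)^{1/k}$ (the paper's eq.\ (3.3.2)) rather than $\left(\frac{B}{CT}\right)^{1/k}$. This does not affect the stability analysis, but it should be fixed for consistency with the algorithm as defined.
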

\begin{proof}
The fixed point of our system occurs when the output of Algorithm 1 is equal to the input $b_t$. Solving this equation yields:
\begin{gather*}
    b^* = \left(\frac{B - \sum_{i=0}^{t-1}Cb_i^k}{c(T-t+1)}\right)^{1/k} \tag{3.3.1} \label{eq:fpt}
\end{gather*}

Noting that in general, if $b_0$ is not a fixed point, then the fixed point for period $t$ is defined as the above. If instead $b_0$ is a fixed point, the system will thus bid this value on average for the entire campaign and the point has the more simplistic formula derived by the following:
\begin{gather*}
    b^* = \left(\frac{B}{C(T+1)}\right)^{1/k} \tag{3.3.2} \label{eq:fp1}
\end{gather*}

which is the special case of \eqref{eq:fpt} where $t=0$. As such it suffices to examine the more general formula to encompass all results. 

In order to prove stability of the above fixed points, we linearize around the point $b^*$ and assess whether the point is attracting or repelling by examining the derivative at that point \cite{strogatz:1994}. Informally, if we let $|f'(b^*)| = |\lambda|$, then we have linear stability when $|\lambda| < 1$ and instability when $|\lambda| > 1$. Now we can analyze our system's fixed point. First we differentiate the function in \textsc{Algorithm} \ref{alg:algorithm} to get the following:

\begin{gather*}
    \frac{d}{db_t}\left(\frac{B-\sum_{i=1}^t Cb_i^k}{Cb_t^k(T-t)} \cdot b_t\right) \\
    = \frac{-C + (1-k)b_t^{-k}(B-\sum_{i=1}^{t-1}Cb_i^k)}{C(T-t)} \tag{3.3.3} \label{eq:deriv}
\end{gather*}

Plugging in the fixed point from \eqref{eq:fpt} gives the following stability estimate respectively

\begin{align*}
    |\lambda| = \left|1 - k\left(\frac{T-t+1}{T-t}\right)\right| \approx |1-k| \text{ for } t \ll T \tag{3.3.4} \label{eq:stab}
\end{align*}

Thus for $0 < k < 2$, the system is in general attracted to the fixed point defined in \eqref{eq:fpt}, and the point is unstable outside of this parameter range.
\end{proof}

\subsection{Convergence Time Bound}
\label{sec.opt_lb}

The crucial result of this paper comes from the minimal convergence time for \textsc{Algorithm} \ref{alg:algorithm}. In contrast to recent papers examining optimal online budget pacing schemes, none have analytically derived an estimate for convergence times to these optimal strategies \cite{lee2013}.

\begin{theorem} \label{thm:conv}
For $|1 - k| < 1$, \textsc{Algorithm} \ref{alg:algorithm} has a bounded distance from the stable bid value at time $t$ defined by:
\begin{align*}
    \epsilon := |b_t - b^*| \leq \gamma^{-1/k} \cdot \frac{|1-k|^{t - 1 + \frac{1}{k}}}{1-|1-k|}
\end{align*}
where $\gamma = \frac{CT}{B}$. Subsequently, we have a convergence time, $t^*$, to the stable bid value bounded as:
\begin{align*}
	t^* \leq \frac{k-1}{k} + \frac{\ln\left|\epsilon \gamma^{1/k} (1 - |1-k|)\right|}{\ln|1-k|} \tag{3.4.1} \label{eq:conv}
\end{align*}
\end{theorem}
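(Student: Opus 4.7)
The plan is to convert the linearization result from the proof of Theorem~\ref{thm:stable} into a global geometric decay bound on the error $\epsilon_t := |b_t - b^*|$, and then invert that bound algebraically to produce the convergence-time estimate \eqref{eq:conv}. The derivative computation \eqref{eq:deriv} evaluated at $b^*$ gave $|g'(b^*)|\approx|1-k|$, so Taylor-expanding the update map $b_t\mapsto b_{t+1}$ about $b^*$ yields a recursion of the form $\epsilon_{t+1}\leq|1-k|\,\epsilon_t + r_t$, where the remainder $r_t$ collects (i) the quadratic correction arising from the nonlinearity of $f(b)=Cb^k$ and (ii) the $(T-t+1)/(T-t)$ factor that was dropped in writing $|\lambda|\approx|1-k|$.

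Next, I would iterate this recursion $t$ times to obtain $\epsilon_t \leq |1-k|^t\,\epsilon_0 + \sum_{i=0}^{t-1}|1-k|^i\,r_{t-1-i}$. Because $|1-k|<1$ by hypothesis, the geometric series $\sum_i|1-k|^i$ sums to $(1-|1-k|)^{-1}$, which is exactly the denominator appearing in the stated bound once each remainder $r_j$ is dominated by a constant times the current squared error. For the leading prefactor, the simple fixed point \eqref{eq:fp1} gives $b^*\sim(B/CT)^{1/k}=\gamma^{-1/k}$, so $\epsilon_0$ lives on this scale; the exponent shift $t-1+1/k$ and the extra $|1-k|^{1/k}$ factor come from absorbing one preparatory iterate into the time index and from the Jacobian of the change of variable $c_t\mapsto b_t=(c_t/C)^{1/k}$ when passing from a bound on the incurred cost back to a bound on the submitted bid.

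Once the error bound $\epsilon_t\leq\gamma^{-1/k}|1-k|^{t-1+1/k}/(1-|1-k|)$ is in hand, the convergence-time statement is pure algebra: set the right-hand side equal to $\epsilon$, take logarithms, and divide by $\ln|1-k|$; since $|1-k|<1$ this logarithm is negative, so the inequality flips and yields $t^*\leq(k-1)/k + \ln|\epsilon\gamma^{1/k}(1-|1-k|)|/\ln|1-k|$ exactly as in \eqref{eq:conv}. The hard part is promoting the linearization from a purely local statement near $b^*$ to a bound that holds from an arbitrary initial bid $b_0$: this requires showing the nonlinear remainder $r_t$ is uniformly dominated by the linear contraction term, which in turn demands careful tracking of the drift of the time-dependent fixed point \eqref{eq:fpt} as the running sum $\sum_i Cb_i^k$ accumulates. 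The geometric-series portion of the argument is routine; the delicate bookkeeping of the $\gamma^{-1/k}|1-k|^{1/k-1}$ prefactor and the uniform control of the quadratic remainder are where the real work lies.
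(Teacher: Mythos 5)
Your high-level skeleton (geometric decay with ratio $|1-k|$, followed by logarithmic inversion) matches the shape of the bound, and your final algebraic step is identical to the paper's. But the mechanism you propose is not the paper's, and as written it has two genuine gaps. The paper never Taylor-expands or tracks nonlinear remainders: it invokes the Banach fixed point theorem (Lemma \ref{lemma:banach}), treating $L=|1-k|$ as a Lipschitz constant for the update map via the mean value theorem, so that the $(1-|1-k|)^{-1}$ factor comes from Banach's telescoping estimate $\|x_t - x^*\| \leq \frac{L^t}{1-L}\|x_1 - x_0\|$ --- not, as you claim, from summing a geometric series of remainder terms. Your route leaves the uniform domination of the quadratic remainder as ``where the real work lies,'' which is precisely the part you never carry out; with $r_t$ quadratic in $\epsilon_t$, the recursion $\epsilon_{t+1} \leq |1-k|\,\epsilon_t + r_t$ does not by itself yield the stated clean bound for arbitrary starting points, and nothing in your argument prevents the remainders from spoiling the constant.

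The second, more decisive gap is the prefactor. In Banach's estimate the relevant initial quantity is the first-step displacement $|b_1 - b_0|$, not the initial error $\epsilon_0 = |b_0 - b^*|$. Since $b_0$ is arbitrary, $\epsilon_0$ can be arbitrarily large, so your assertion that ``$\epsilon_0$ lives on the scale $\gamma^{-1/k}$'' is unjustified and cannot be the source of the prefactor. The paper instead maximizes $|b_1 - b_0|$ over $b_0$ by elementary calculus, finding the critical point $b_0 = \left(\frac{B|1-k|}{CT}\right)^{1/k}$ and the resulting uniform displacement bound \eqref{eq:maxdis}; writing $\left(\frac{B|1-k|}{CT}\right)^{1/k} = \gamma^{-1/k}|1-k|^{1/k}$ is exactly where the extra $|1-k|^{1/k}$ factor, and hence the exponent $t-1+\frac{1}{k}$, comes from (the shift from $t$ to $t-1$ absorbs the $1/|1-k|$ appearing in \eqref{eq:maxdis}). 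Your alternative explanation --- a ``Jacobian of the change of variable $c_t \mapsto b_t$'' plus an absorbed preparatory iterate --- is not a derivation and produces neither the factor nor the exponent. To repair your proof you would need to replace the $\epsilon_0$-scale claim with a uniform bound on the first-step displacement (or restrict $b_0$ to a bounded invariant region on which the remainder is controlled), at which point you would essentially be reconstructing the paper's contraction argument.
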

\noindent We can see that the convergence time is dependent upon the parameters of our RTB campaign, namely the budget, length, and degree of the latent cost function. In order to prove the theorem, we first need a crucial lemma from the theory of metric spaces, proven in \cite{fixpt_thrm}.

\begin{lemma}[Banach Fixed Point Theorem] \label{lemma:banach}
Assume that $\Omega \subset \mathbb{R}^n$ is closed, and that $G: \Omega \mapsto \Omega$ is a contraction, that is, there exists $0 \leq L < 1$ such that
\begin{align*}
    \|G(x) - G(y)\| \leq L\|x - y\|,
\end{align*}
for all $x,y \in \Omega$, where $\|\cdot\|$ is any norm on $\mathbb{R}^n$. Then the function $G$ has a unique fixed point $x^* \in \Omega$. Additionally, let $x_0 \in \Omega$ be arbitrary and define the sequence $\{x_t\}_{t=1}^{\infty} \subset \Omega$ by $x_t = G(x_{t-1})$. Then we have the estimate
\begin{align*}
    \|x_{t} - x^*\| &\leq \frac{L^t}{1-L}\|x_1 - x_0\|
\end{align*}
In particular the sequence $\{x_t\}_{t=1}^{\infty}$ converges to $x^*$
\end{lemma}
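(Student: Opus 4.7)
The plan is to obtain all three conclusions (existence, uniqueness, and the geometric convergence estimate) by the classical Picard iteration argument, leveraging the fact that a closed subset of $\mathbb{R}^n$ inherits completeness from the ambient Euclidean space. The starting point is the observation that applying the contraction inequality once to consecutive iterates gives $\|x_{t+1} - x_t\| = \|G(x_t) - G(x_{t-1})\| \leq L\|x_t - x_{t-1}\|$, which by a straightforward induction yields $\|x_{t+1} - x_t\| \leq L^t \|x_1 - x_0\|$ for every $t \geq 0$. This is the only place the contraction hypothesis really gets used; everything afterwards is a consequence of this single geometric decay.

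From this consecutive-distance bound I would show that $\{x_t\}$ is Cauchy. Using the telescoping identity $x_{t+m} - x_t = \sum_{j=0}^{m-1}(x_{t+j+1} - x_{t+j})$, the triangle inequality, and summation of the geometric series (which converges because $L < 1$), one obtains the uniform-in-$m$ bound $\|x_{t+m} - x_t\| \leq \frac{L^t}{1-L}\|x_1 - x_0\|$. Since $\Omega$ is closed in the complete space $\mathbb{R}^n$ it is itself complete, so the Cauchy sequence has a limit $x^* \in \Omega$. Letting $m \to \infty$ in the inequality above, using continuity of the norm, delivers the desired estimate $\|x_t - x^*\| \leq \frac{L^t}{1-L}\|x_1 - x_0\|$ directly. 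To see that $x^*$ actually is a fixed point, I would note that $G$ is Lipschitz (immediate from the contraction hypothesis with constant $L$) and hence continuous, so passing to the limit in the recursion $x_{t+1} = G(x_t)$ gives $x^* = G(x^*)$.

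For uniqueness, if $y^* \in \Omega$ were a second fixed point, then $\|x^* - y^*\| = \|G(x^*) - G(y^*)\| \leq L\|x^* - y^*\|$, and since $L < 1$ this forces $\|x^* - y^*\| = 0$. The main subtlety, and essentially the only step that really uses the closedness hypothesis, is ensuring $x^* \in \Omega$; without closedness the iterates could converge to a limit outside the domain on which $G$ is even defined, and neither the fixed-point equation nor the iteration itself would make sense at $x^*$. A second minor point of care is the passage $m \to \infty$ in the Cauchy estimate to recover the sharp $\frac{L^t}{1-L}$ prefactor, which rests on continuity of the norm rather than any deeper approximation argument. Beyond these two observations, every step reduces to a short application of the contraction inequality, the triangle inequality, or summation of a geometric series.
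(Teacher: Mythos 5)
Your proposal is correct: it is the classical Picard iteration argument (geometric decay of consecutive differences, telescoping to get the Cauchy property, completeness of the closed set, continuity of $G$ to identify the limit as a fixed point, and the contraction inequality for uniqueness), and the $m \to \infty$ passage correctly recovers the stated $\frac{L^t}{1-L}$ estimate. Note that the paper does not prove this lemma at all --- it is quoted as a known result with a citation --- so your argument simply supplies the standard proof that the cited reference contains. The only point worth a clause of care is that since the statement allows \emph{any} norm on $\mathbb{R}^n$, completeness of $\Omega$ in that norm follows from the equivalence of all norms in finite dimensions, not literally from the Euclidean structure; this is immediate but deserves mention if the proof were written out in full.
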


Invoking this well studied result from Banach requires only that the mapping is contractive. We henceforth proceed in proving the theorem by demonstrating this crucial property of \textsc{Algorithm} 1, and applying Lemma \ref{lemma:banach}.
\begin{proof}[Theorem \ref{thm:conv}]
To invoke Lemma \ref{lemma:banach} we need to estimate an adequate Lipschitz constant $L$ and put a proper bound on the distance between two points in the sequence defined by our algorithm. We first note that 
\begin{gather*}
    |G(x) - G(y)| \leq L|x - y| \\
    \Rightarrow \frac{|G(x) - G(y)|}{|x - y|} \leq L \leq |G'(x^*)|
\end{gather*}
where $x^* \in (x,y)$ gives us an easily computable bound the sequence when $|G'(x)| < 1$. Note that we now use the absolute value rather than other norms since our algorithm uses scalar valued functions. 

Stemming from our analysis in Section \ref{sec.opt_gen}, a Lipschitz bound for the dynamics when $0 < k < 2$ is $L = |1-k|$. The remaining component needed to apply Lemma \ref{lemma:banach} is a bound on the distance $|b_1 - b_0|$. For any $b_0$ that is not the fixed point, we can look for the maximal distance to $b_0$ by taking the derivative of $b_1 - b_0$ and setting it equal to 0. Assume without loss of generality that $b_1 - b_0 > 0$.

\begin{align*}
    \frac{d}{db_0}|b_1 - b_0| &= \frac{d}{db_0}\left|\frac{B-Cb_0^k}{Cb_0^k(T-1)} \cdot b_0 - b_0\right| \\
    &= \frac{-C + |1-k|b_0^{-k}B}{C(T-1)} - 1
\end{align*}
Setting the derivative equal to 0 and finding the maximal $b_0$
\begin{align*}
	b_0 = \left(\frac{B|1-k|}{CT}\right)^{1/k}
\end{align*}
which gives the distance value
\begin{align*}
    |b_1 - b_0| = \left|\frac{T-|1-k|}{|1-k| \cdot (T-1)} \cdot \left(\frac{B|1-k|}{CT}\right)^{1/k}\right| \tag{3.4.2} \label{eq:maxdis}
\end{align*}

Now plugging in our Lipschitz constant and maximal distance from \eqref{eq:maxdis} to the estimate from Lemma \ref{lemma:banach} yields
\begin{align*}
	|b_t - b^*| &\leq \frac{|1-k|^{t-1}}{1-|1-k|} \cdot \left|\frac{T-|1-k|}{T-1} \cdot \left(\frac{B|1-k|}{CT}\right)^{1/k}\right| \\
	&\approx \frac{|1-k|^{t-1}}{1-|1-k|} \cdot \left|\left(\frac{B|1-k|}{CT}\right)^{1/k}\right|
\end{align*}
Lastly, we can define $\gamma := \frac{CT}{B}$ to get the simplified
\begin{align*}
	|b_t - b^*| \leq \gamma^{-1/k} \cdot \frac{|1-k|^{t - 1 + \frac{1}{k}}}{1-|1-k|}
\end{align*}
and finally, we can rearrange this inequality to give us a bound on the convergence time:
\begin{align*}
	t &\leq \frac{k-1}{k} + \frac{\ln\left|\gamma^{-1/k} (1 - |1-k|) |b_t-b^*| \right|}{\ln|1-k|} \\
	&= \frac{k-1}{k} + \frac{\ln\left|\epsilon \gamma^{1/k} (1 - |1-k|)\right|}{\ln|1-k|}
\end{align*}
thus, we have the result.
\end{proof}

Lastly, it is important to note that up to this point, we have only considered latent cost functions which are monomial (consisting of one term). This is a simplifying assumption which is justified by bounding polynomial cost functions with monomials, detailed in the following lemma (the proof can be found in Appendix \ref{sec:appendix}).
\begin{lemma} \label{lemma:bounded}
For a cost function defined using the general form $c_1b^{k_1} + c_2b^{k_2} + ... c_mb^{k_m}$ where $k_1 > k_2 > ... > k_m$ and $b > 0$, the convergence time of \textsc{Algorithm} \ref{alg:algorithm} is bounded by that of monomial functions.
\end{lemma}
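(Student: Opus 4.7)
My plan is to mimic the Banach fixed point argument of Theorem \ref{thm:conv}, replacing the monomial Lipschitz constant $|1-k|$ by one derived directly from the polynomial update map. The intuition is that for $f(b) = \sum_{i=1}^m c_i b^{k_i}$, the linearized dynamics at the fixed point depend on a weighted average of the exponents $k_i$, so the resulting contraction rate lies between those for monomials of exponents $k_1$ and $k_m$. This should yield a convergence time no worse than that of the slower of the two extreme monomial cases.

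Concretely, I would first observe that a positive fixed point $b^*$ satisfying $f(b^*) = B_r^t/(T-t)$ exists and is unique, which follows from strict monotonicity of $f$ on $\mathbb{R}_+$ together with the intermediate value theorem. Differentiating the update map $g(b) = \frac{B_r^t}{T-t} \cdot \frac{b}{f(b)}$ and using $g(b^*) = b^*$ (which forces $B_r^t/(T-t) = f(b^*)$) gives
\begin{align*}
    g'(b^*) = 1 - \frac{b^* f'(b^*)}{f(b^*)} = 1 - \bar k, \qquad \bar k := \sum_{i=1}^m w_i k_i, \quad w_i := \frac{c_i (b^*)^{k_i}}{f(b^*)}.
\end{align*}
Since the $w_i$ are positive and sum to one, $\bar k$ is a convex combination of $k_1,\ldots,k_m$ and therefore $\bar k \in [k_m, k_1]$. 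Consequently the local contraction factor satisfies
\begin{align*}
    |\lambda_{\text{poly}}| = |1 - \bar k| \;\leq\; \max\{|1-k_1|,\, |1-k_m|\} \;=:\; L^{*},
\end{align*}
which coincides with the Lipschitz constant of whichever of the two extreme monomial cost functions contracts more slowly.

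Plugging $L^{*}$ into the Banach estimate of Lemma \ref{lemma:banach}, together with a comparable bound on the initial displacement $|b_1 - b_0|$, produces a convergence time estimate of the same shape as \eqref{eq:conv} with $k$ replaced by whichever of $k_1, k_m$ attains the maximum in the definition of $L^{*}$. Since this is precisely the monomial convergence time from Theorem \ref{thm:conv}, the polynomial's convergence time is bounded by $\max\{t^{*}_{k_1}, t^{*}_{k_m}\}$, which is the desired conclusion. The main obstacle is the bound on $|b_1 - b_0|$: the explicit optimization over $b_0$ carried out in the derivation of \eqref{eq:maxdis} does not generalize cleanly to sums. I would handle this by sandwiching $f(b_0)$ between two monomials of a single degree $k^{*} \in \{k_1, k_m\}$, exploiting monotonicity of $b^{k_i}$ in $k_i$ on either side of $1$, so that the polynomial displacement is within a multiplicative constant of the monomial one; this constant then appears only inside the logarithm of \eqref{eq:conv} and is dominated by the leading dependence on $k$.
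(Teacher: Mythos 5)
Your proposal is correct in substance, but it takes a genuinely different route from the paper's own proof. The paper never linearizes the polynomial dynamics at all: it sandwiches the cost $s(b)=c_1b^{k_1}+\cdots+c_mb^{k_m}$ between two monomials of a \emph{single} exponent (the minimal exponent $k_m$ when $0<b<1$, the maximal exponent $k_1$ when $b\geq 1$) with constants $c_1$ and $\tilde C=\sum_i|c_i|$, observes that this induces a pointwise sandwich on the output of \textsc{Algorithm} \ref{alg:algorithm}, and then pushes that sandwich through the rearranged Banach estimate of Lemma \ref{lemma:banach} to trap the polynomial's convergence time between the convergence times of the two bounding monomials. You instead differentiate the update map at its fixed point and show that the effective exponent is the cost-weighted average $\bar k=\sum_i w_ik_i\in[k_m,k_1]$, so that by convexity of $x\mapsto|1-x|$ the contraction factor is at most $\max\{|1-k_1|,|1-k_m|\}$; the paper's same-exponent sandwich appears in your plan only as a secondary device for handling the initial displacement $|b_1-b_0|$. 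Each approach buys something. Yours actually \emph{derives} a contraction constant for the polynomial map -- something the paper's proof arguably elides, since it applies the Banach estimate to $s(b)$ with a Lipschitz constant $L$ that is never established for the polynomial dynamics -- and it yields the clean interpretation that the polynomial behaves locally like a monomial of degree $\bar k$. The paper's monotone-comparison argument, by contrast, requires no differentiation and delivers two-sided bounds (a lower bound on the convergence time as well as an upper bound). Two caveats for your version: the weights $w_i$ form a convex combination only if all $c_i>0$ (natural for a cost function, but worth stating explicitly, since a negative coefficient can push $\bar k$ outside $[k_m,k_1]$), and you need all $k_i\in(0,2)$ so that $L^{*}<1$. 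Finally, note that you, like the paper, use the derivative at the fixed point as a stand-in for a global Lipschitz constant, which is only locally justified; this is a shared looseness rather than a gap specific to your argument.
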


\subsection{Loss of Stability} 
\label{sec.opt_insta}
As noted in Section \ref{sec.opt_gen}, when $k > 2$ the fixed point for average bidding becomes unstable. As a result, \textsc{Algorithm} \ref{alg:algorithm} does not converge for a cost function of quadratic or higher degree. In this instance however, we can once again utilize guard rails on spending to obtain convergence to cyclic bidding behavior. The following theorem, proven in Appendix \ref{sec:appendix}, demonstrates the importance of guard rails on spending to ensure adequate budget pacing for advertisers adapting to the RTB auction mechanism. The proof invokes the same strategy as Theorem \ref{thm:linear}, where we now find $b_0$ which are fixed points of \textsc{Algorithm} \ref{alg:algorithm} applied twice, and can be found in the appendix.
\begin{theorem} \label{thm:chaos}
For a cost function of the form min$\{C \cdot b^k,M\}$ where $C >0$ and $\frac{B}{CT} < M < B$, \textsc{Algorithm} \ref{alg:algorithm} undergoes a period doubling bifurcation at $k = 2$, leading to oscillatory behavior between two average bid values for each iteration $t$.
\end{theorem}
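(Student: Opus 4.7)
The plan is to cast the statement as a classical flip (period-doubling) bifurcation for a one-dimensional map and then exhibit the resulting 2-cycle by finding fixed points of $G \circ G$, mirroring the approach used in Theorem \ref{thm:linear} and Theorem \ref{thm:semi_lin}. Define the one-step update
\[
G(b) \;=\; \frac{B - \min\{Cb^k, M\}}{\min\{Cb^k, M\}\,(T-1)}\cdot b,
\]
which, in the ``uncapped'' regime $Cb^k \leq M$, coincides (up to the usual $t \ll T$ simplification already used in Section \ref{sec.opt_gen}) with the map whose stability multiplier was computed in \eqref{eq:stab} to be $|1-k|$. The first step is therefore to observe that $G'(b^*) = 1-k$ crosses $-1$ as $k$ passes through $2$: this is precisely the normal-form signature of a flip bifurcation, so the fixed point $b^*$ loses stability and a 2-cycle is born at $k = 2$.

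The second step is to locate that 2-cycle concretely, by seeking solutions of $G(G(b)) = b$ that are not themselves fixed points of $G$. Following the structure of the proof of Theorem \ref{thm:semi_lin}, I would write out the second iterate of Algorithm \ref{alg:algorithm} explicitly and factor out the root corresponding to $G(b) = b$ (already known from \eqref{eq:fp1}). The remaining factor should vanish for $k > 2$ but have no real solutions distinct from $b^*$ for $k < 2$, with the two new roots colliding at $b^*$ precisely at $k = 2$; this is what shows the bifurcation is genuine and is period-doubling in character rather than, say, a saddle-node.

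The third step is where the guard rail $M$ plays its essential role. Without the cap, $G(b)$ behaves like $\tfrac{B}{C(T-1)}b^{1-k} - \tfrac{b}{T-1}$, and for $k > 2$ one can check that iterates escape to infinity rather than settle into oscillation. The condition $\tfrac{B}{CT} < M < B$ is tailored so that (a) the unique fixed point $b^*$ sits in the uncapped region (this is what $M > B/(CT)$ guarantees, since $b^* = (B/(C(T+1)))^{1/k}$ satisfies $Cb^{*k} \approx B/T < M$), and (b) sufficiently large bids land in the capped region where $G$ collapses to the linear map $b \mapsto \tfrac{B-M}{M(T-1)}b$. I would then verify that the emergent 2-cycle $\{p_1, p_2\}$ straddles these two regimes, with one point uncapped and the other capped, so that the composition $G \circ G$ is forced to be a contraction near the cycle. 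Stability of the 2-cycle then follows by computing the multiplier $(G \circ G)'(p_1) = G'(p_1)\, G'(p_2)$ and checking $|G'(p_1) G'(p_2)| < 1$ for $k$ slightly above $2$ using the regime-specific formulas for $G'$.

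The hard part, and the reason the proof is shunted to the appendix, is the piecewise nature of $G$ induced by the $\min\{\cdot, M\}$: the 2-cycle analysis must track which cycle point lies in which regime, and the explicit expressions for $p_1, p_2$ are algebraically messy. The cleanest route is probably to avoid solving for $p_1, p_2$ in closed form, and instead combine (i) the local normal-form argument at $k = 2$ (guaranteeing that a 2-cycle emerges and is locally attracting on the side where $|G'(b^*)| > 1$) with (ii) a global argument showing that any trajectory that would escape upward is folded back by the cap at $M$ into a compact forward-invariant interval containing $b^*$. Together these give existence of the oscillatory 2-cycle and rule out unbounded divergence, which is exactly what the theorem claims.
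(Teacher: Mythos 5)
Your proposal follows essentially the same route as the paper's appendix proof: both reduce the theorem to finding fixed points of the second iterate of Algorithm \ref{alg:algorithm} that are not fixed points of the first iterate, both track which side of $\min\{Cb^k,M\}$ each cycle point falls on, and both use the fact that the magnitude of the multiplier $|1-k|$ from \eqref{eq:stab} crosses $1$ at $k=2$ to identify where $b^*$ destabilizes. The main discrepancy is in your second step: the paper's Case~1 (both points uncapped) finds that the smooth, uncapped second-iterate equation has \emph{only} the trivial fixed point $b_0 = \left(B/(C(T+1))\right)^{1/k}$ as a solution, so the classical flip picture you describe --- two new real roots of the factored smooth equation colliding at $b^*$ when $k=2$ --- never materializes; the 2-cycle here is created entirely by the guard rail, exactly as your third step anticipates. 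Concretely, the paper rules out the ``uncapped-then-capped'' configuration (its Case~2) by showing the would-be cycle point is real only if $M > B$, which is precisely where the hypothesis $M < B$ enters (your proposal does not pinpoint this role of $M < B$), and it exhibits the genuine cycle in the ``capped-then-uncapped'' configuration (Case~3) in closed form, $b_- = \left((B-M)^2/\bigl(C(B-M-MT+MT^2)\bigr)\right)^{1/k}$ with partner $b_+ = \frac{MT}{B-M}\,b_-$ (or $\frac{B-M}{MT}\,b_-$), real precisely because $M > \frac{B}{CT}$. Your additional ingredients --- verifying the cycle is attracting via $|G'(p_1)G'(p_2)| < 1$, and the trapping-region argument ruling out escape to infinity --- go beyond what the paper actually proves (it never checks stability of the 2-cycle or boundedness of general orbits), so carrying them out would strengthen, rather than merely replicate, the appendix argument.
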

This period doubling is further indicative of a canonical path to chaos \cite{strogatz:1994} and as such a loss of stability for our system.


\section{Real-World Implementation}
\label{sec.realworld}
The algorithm presented is easy to implement and can be specifically tailored to a wide range of applications. As noted previously, slight variants on the presented algorithm are used at several companies which tailor the bidding to tertiary interests that we discuss in this section. \footnote{Variations of the bid pacing algorithm presented in this paper are currently implemented at two major companies with which an author was previously involved. Due to proprietary data restrictions at these companies, we focus mainly on important theoretical aspects of the algorithm along with simulations results in this paper.} We focus on two specific variations: fluctuating target spend amounts and subthreshold budgets, however the simplicity of our algorithm allows for a plethora of adjustments for specific needs.

\subsection{Non-Uniform Pacing}
While our main analysis demonstrates the stability of average bidding for the desired goal of uniform pacing, the algorithm can be simply adapted to meet changing target spend amounts throughout the campaign. For instance, if an advertiser wants to decrease spending during the morning since the bulk of their target market is not online, they may decrease their target spend amount until later in the day. A variant of this form is easily achieved by adjusting the submitted budget for any given time period(s). This is achieved by adding another input parameter $\kappa_t \in (0,1]$ which scales the budget by the multiplier at any given time point. Pseudocode for such a variation on our algorithm is presented in Algorithm \ref{alg:algorithm_nonu}. We note that in spite of this tailoring to a specific need, the convergence properties that were theoretically justified remain, and will be further validated in Section \ref{sec.simul}.

\begin{algorithm}[t]
\caption{Non-Uniform Pacing}
\label{alg:algorithm_nonu}
\begin{algorithmic}[1]
\STATE \textbf{Input}: $B, T, t, b_t, (c_0, ..., c_t), \kappa_t$ \\
\STATE{$B_v \leftarrow \kappa_t \cdot B$} \\
\STATE{$b_{t+1} = $ \textsc{Algorithm 1($B_v, T, t, b_t, (c_0, ..., c_t)$)}}
\STATE \textbf{Output}: $b_{t+1}$
\end{algorithmic}
\end{algorithm}

\begin{algorithm}[t]
\caption{Subthreshold Budget}
\label{alg:algorithm_sub}
\begin{algorithmic}[1]
\STATE \textbf{Input}: $B, T, b_t, (s_0, ..., s_t)$ \\
\IF{$B < \tau$ for some threshold $\tau$}
	\STATE{$B_v \leftarrow \sigma \cdot B$ for some $\sigma > 1$ fixed by the provider} \\
	\STATE{$b_{t+1} =$ \textsc{Algorithm 1($B_v, T, b_t, (s_0, ..., s_t)$)}}
\ELSE
	\STATE{$b_{t+1} =$ \textsc{Algorithm 1($B, T, b_t, (s_0, ..., s_t)$)}}
\ENDIF
\STATE \textbf{Output}: $b_{t+1}$
\end{algorithmic}
\end{algorithm}

\subsection{Subthreshold Budgets}
In the instance where users submit a budget to the DSP that are too small for meaningful uniform pacing, the provider can implement a variant on our algorithm for forcibly exhausting a budget early in the campaign period, thus resulting in the user needing to update with a larger budget for the next campaign. \footnote{We once again note that this is common practice at certain companies, in an effort to force users to meet a minimum budget threshold.} This is achieved by implementing a ``virtual budget" where the DSP simply scales up the input budget and thus the algorithm establishes a higher average bid than is feasible for the user's actual budget, leading to early exit from the campaign. While this outcome is counter to the presented practicality of the algorithm, it is an effective means by which a provider can ensure that advertisers are submitting a high enough budget to be competitive in the market space. This variant's pseudocode is presented in Algorithm \ref{alg:algorithm_sub}. We reiterate that although the \emph{intent} of this adapted algorithm is not in line with the primary justification for our presented results, it nonetheless retains the convergence properties of Section \ref{sec.opt} and will be validated through experimentation in Section \ref{sec.simul}.

\section{Simulation Results}
\label{sec.simul}
In this section, we proceed to validate our theoretical results through comprehensive simulations on both real and simulated data. We demonstrate the discussed convergence dynamics of the bidding scheme as well as implement our DSP within the context of a Yahoo! RTB auction system from publicly available data. All codes for the presented results are available as supplementary material to our paper.

\subsection{Bid Dynamics}
We here validate our model against simulated results for various different parameter values. These experiments are meant to illustrate the empirical behavior of our simplified model. We present the dynamics of bid values as a function of time in Figure \ref{fig:bid_dynamics} for the three algorithms discussed above. Namely, the first column plots the dynamics of Algorithm \ref{alg:algorithm}, with columns two and three depicting Algorithms \ref{alg:algorithm_nonu} and \ref{alg:algorithm_sub} respectively. 

For the presented simulations, the advertisers is given an arbitrary initial average bid for the first period, and subsequently bids based on the corresponding algorithms. We henceforth set the budget for all simulations at $B = 50000$, $T = 1000$, and an amount spent function defined as $f(b) = \min\{b_t^k,100\}$. Additionally, we once again make the simplifying assumption that the spent functions are monomials, allowing us to study the dynamics exclusively as a function of the degree $k$.

\begin{figure*}[h]
	\center{\includegraphics[width=0.68\linewidth]{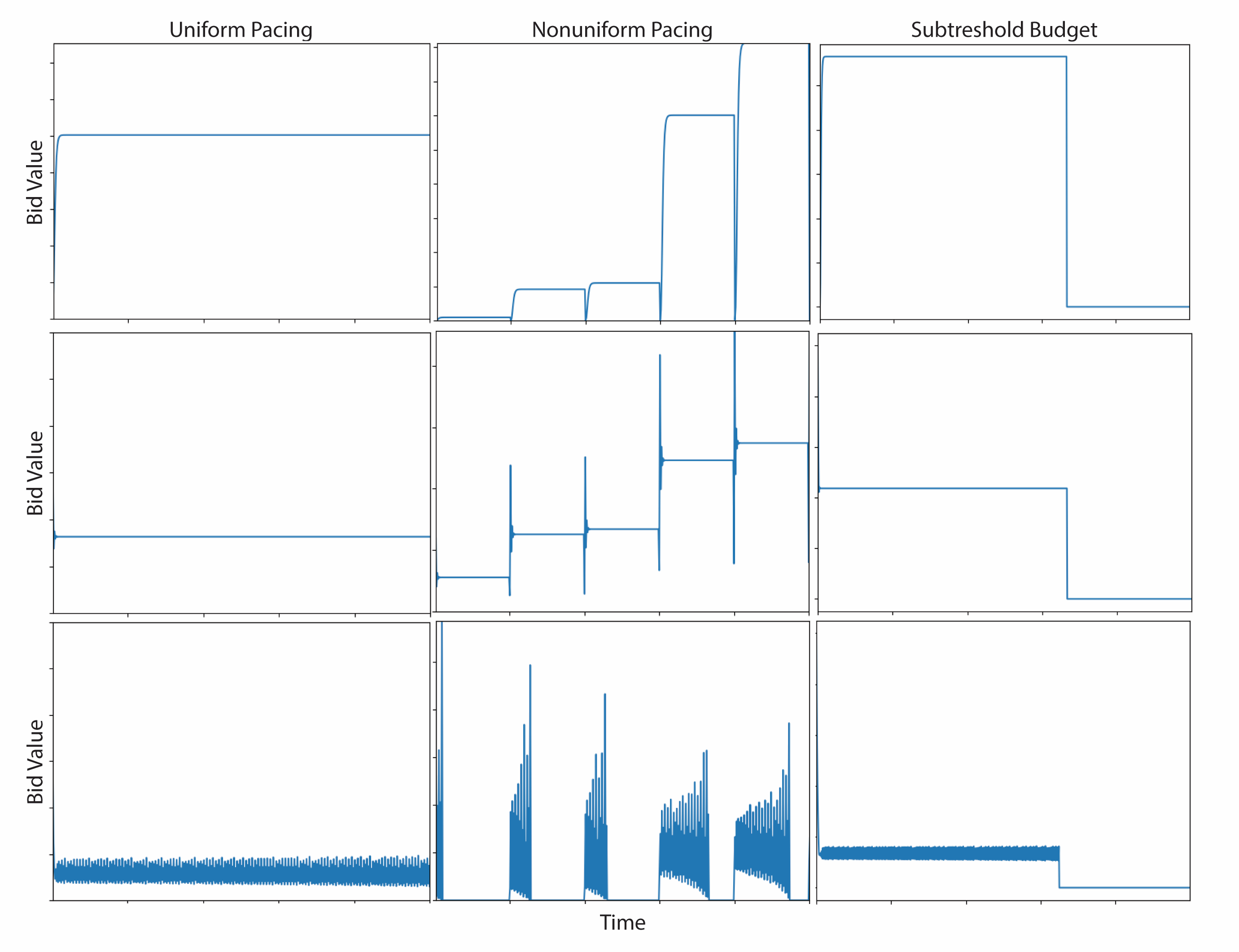}}
    \caption{\label{fig:bid} Bid values for different $k$ and algorithms. The first row plots the bid dynamics of Algorithms \ref{alg:algorithm}, \ref{alg:algorithm_nonu}, \ref{alg:algorithm_sub} for $k = 0.5$. The second and third rows plot the dynamics for $k = 1.4$ and $2.3$ respectively.}
    \label{fig:bid_dynamics}
\end{figure*}

We first consider $k = 0.5$, evaluating \eqref{eq:conv} with the given parameter set gives a bound on the iterations to convergence of $t^* \leq 31$. In simulating the model (row 1 of Figure \ref{fig:bid_dynamics}) we see convergence within an error tolerance $\epsilon = 10^{-6}$ in 23 iterations for Algorithm \ref{alg:algorithm}, approximately 3\% of the total auction period with 99.90\% of the budget spent. Additionally, in the non-uniform instance where we force the system to increase bidding as the day progresses via Algorithm \ref{alg:algorithm_nonu}, we see that after each pacing increase the algorithm quickly converges to a new stable state until the pacing is further increased according to the advertiser's request. Lastly, setting a budget threshold of 75,000, we can use Algorithm \ref{alg:algorithm_sub} to force an early exit of the advertiser after convergence to a stable bid that is not sustainable.

Increasing the cost function parameter to $k=1.4$, we predict the time to convergence to be at most $t^* = 19$ and in simulation (row 2 of Figure \ref{fig:bid_dynamics}) we see convergence in 18 iterations with 99.87\% of the budget spent. Thus, our theoretical results for the stable region are further validated in simulation.  Moreover, we once again exhibit convergence to increasing bid values in the nonuniform budget pacing context, as well as successful campaign termination by Algorithm \ref{alg:algorithm_sub}.

We lastly show the interesting dynamics that arise when $k \geq 2$. At $k=2$ we have the birth of 2-cycles in the dynamics, never converging on a singular fixed point but remaining oscillatory (as discussed in Section \ref{sec.opt_insta}). As we further increase the parameter, higher order cycles emerge until the system devolves into aperiodic behavior, more formally known as chaos \cite{strogatz:1994}. The results of this parameter tuning align with the bifurcations exhibited by the logistic map ($f(x) = rx(1-x)$) \cite{strogatz:1994}. The last row of Figure \ref{fig:bid_dynamics} plots the bid dynamics for $k = 2.3$ using the three algorithms. We see that in all instances, the bid value oscillates about a bid strategy until the campaign's conclusion or budget exhaustion. However, even in this unstable parameter region, our algorithm effectively mitigates this oscillatory behavior and ensures that the advertisers stays in the ad campaign for its total duration.

\subsection{First Price Auction}

To further demonstrate the efficacy of our budget pacing algorithm, we invoke an agent running Algorithm \ref{alg:algorithm} within a first-price auction mechanism as pitted against advertisers from the Yahoo! real-time bidding dataset. This dataset contains the bids over time of all advertisers participating in Yahoo! Search Marketing auctions for the top 1000 search queries during the period from June 15, 2002, to June 14, 2003 (segregated into 15 minute time periods for each day). We simulate the efficiency of our procedure by selecting a day at random from the dataset and selecting bid values for arriving impressions within each 15 minute window via Algorithm \ref{alg:algorithm}. On each impression in this period, we implement a first-price auction to decide on a winner and incurred cost (ie. the highest bidder wins the impression and pays exactly what they bid). Thus, for the submitted bid value by the algorithm, we observe cost $c_t$ (proportional to the number of auctions won) for a given 15 minute period and update the bid value for the next period in an effort to uniformly pace the budget expenditure. Figure \ref{fig:budget} plots the proportion of budget spent throughout the day as compared to a perfectly uniform pacing.  

\begin{figure}[h]
	\center{\includegraphics[width=0.78\linewidth]{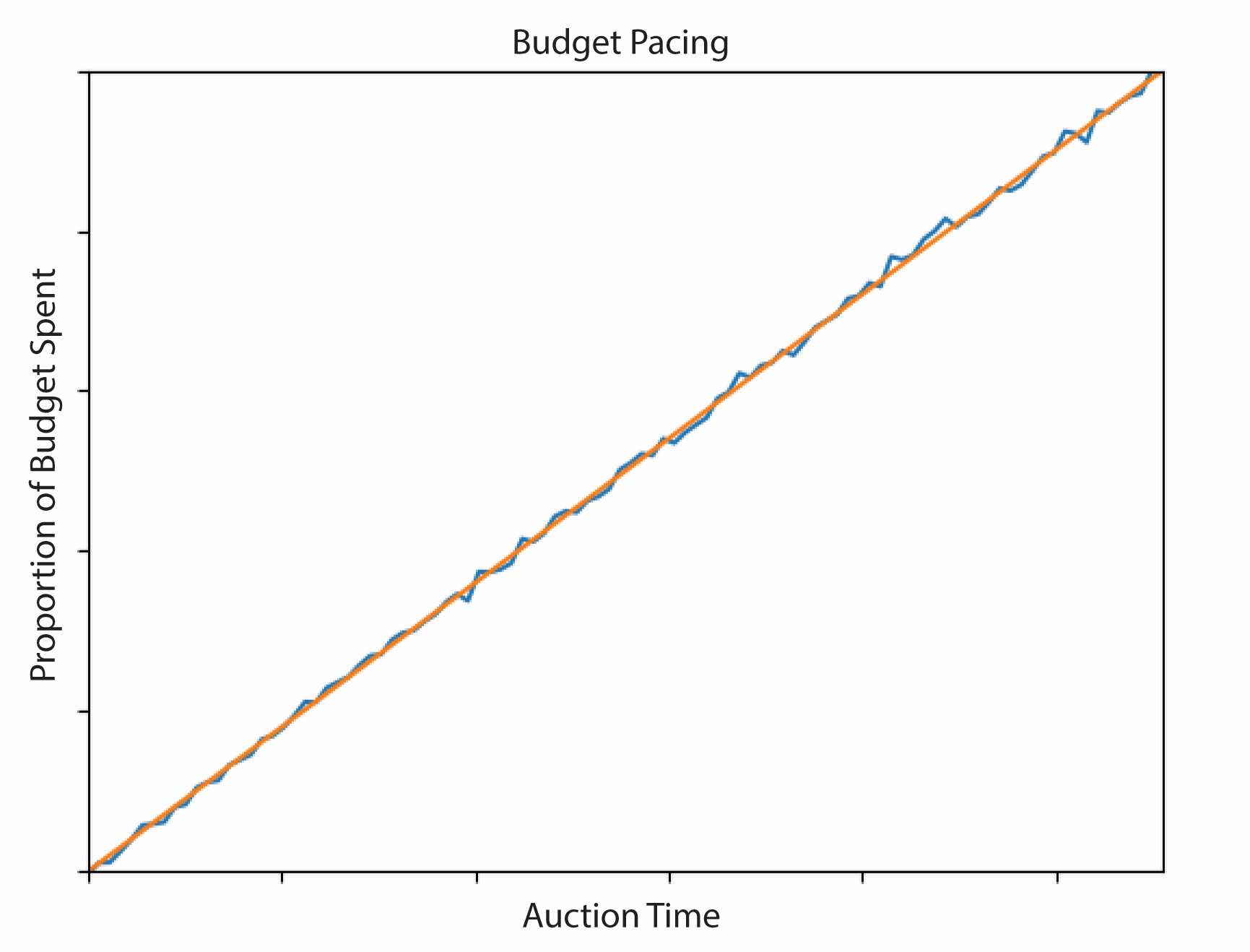}}
    \caption{\label{fig:budget} Proportion of the total budget spent as a function of time for an advertiser running Algorithm \ref{alg:algorithm} in a repeated first-price auction format. The orange curve indicates uniformly paced budget and blue indicates the algorithms performance.}
\end{figure}

\begin{figure}[h]
	\center{\includegraphics[width=0.78\linewidth]{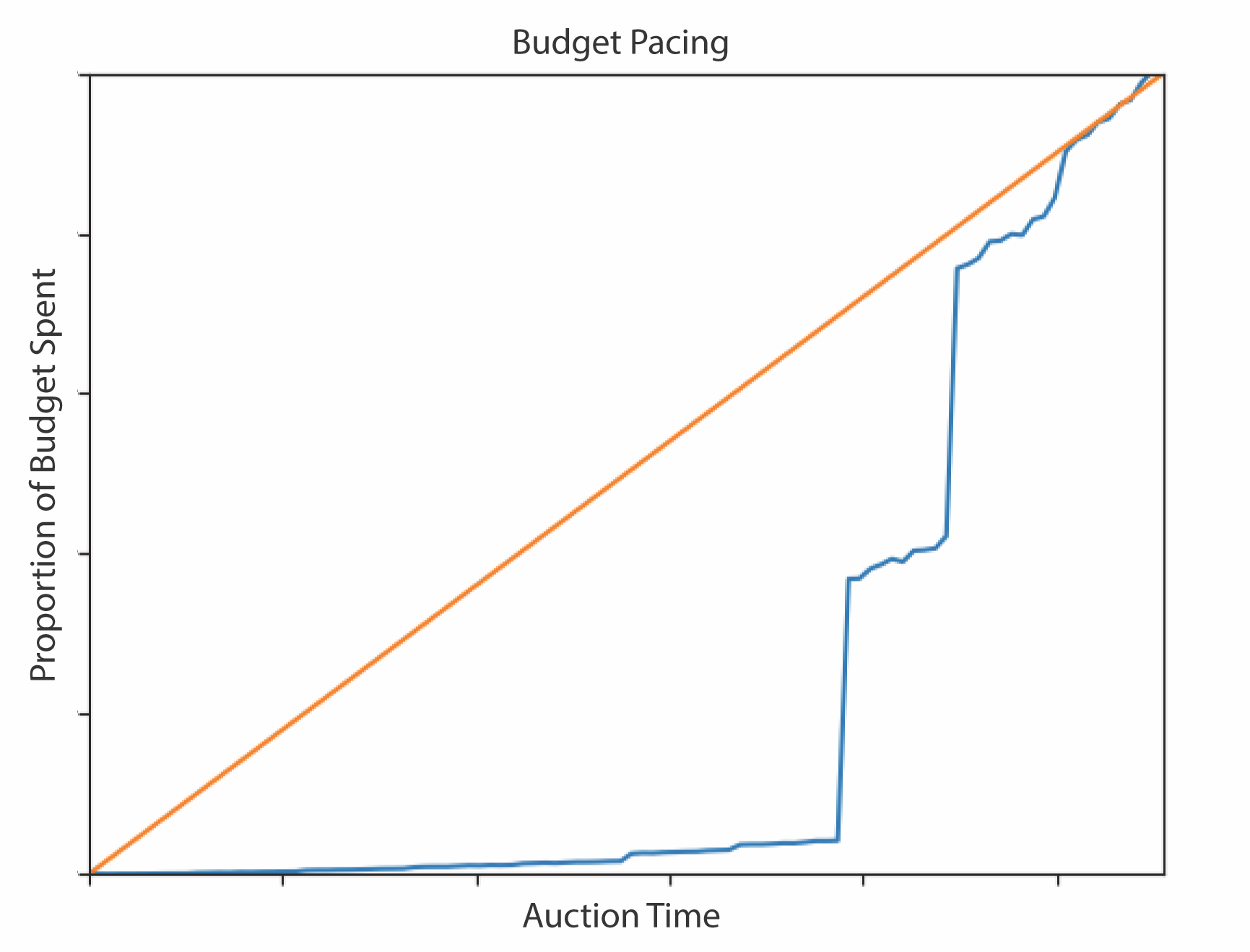}}
    \caption{\label{fig:budget_nonu} Proportion of the total budget spent as a function of time for an advertiser running Algorithm \ref{alg:algorithm_nonu} in a repeated first-price auction format. The orange curve indicates nonuniformly paced budget and blue indicates the algorithms \emph{nonuniform }performance.}
\end{figure}

We see that -- even though the spent amount function is changing in each time period by nature of the first-price auction, number of bidders, and impression values -- our algorithm successfully adjusts its bidding throughout the day and maintains a near uniform pacing of the allotted daily budget.

Intuitively, the algorithm is here increasing its bid value for a 15 minute window when it observes a cost for the prior period that indicates too few auctions were won. In a symmetric manner, when bidding too high and more spending occurs than desired, the algorithm lowers its bid value to win less of the total auctions for a period. As a result, our very simplistic algorithm manages to successfully pace its budget uniformly in the given realistic framework.

We further implement Algorithm \ref{alg:algorithm_nonu} within the repeated first price auction setting, where we invoke multipliers that force minimal spending at the start of the day which slowly increases until the end of day. Such a pacing may be used for advertisers who are hoping to capture their market after the workday or those who heavily use a given platform in the evening. Figure \ref{fig:budget_nonu} plots the proportion of budget spent throughout the day for this strategy as compared to a perfectly uniform pacing.

\eat{
In this section we validate our model against simulated results for various different parameter values. These experiments are meant to illustrate the empirical behavior of our simplified model, as we are limited in the use of real-world data. We first examine the stable region where the cost function $f(b) = b^k$ has degree $k < 2$ and compute the fast convergence time. We subsequently analyze the instability that arises for spent functions where $k \geq 2$, demonstrating that the system undergoes a bifurcation to produce a 2-cycle, followed by period doubling en route to chaotic behavior.

\subsection{Stable Region Simulation}
\label{sec.simul_stab}
Advertisers are given an arbitrary initial average bid for the first period, and subsequently bid based on the optimal bidding algorithm. We henceforth set the budget for all simulations at $B = 50000$, $T = 1000$, and an amount spent function defined as $f(b) = \min\{b_k,100\}$. Additionally, we once again make the simplifying assumption that the spent functions are monomials, allowing us to study the dynamics exclusively as a function of the degree $k$.

We first consider $k = 0.5$, evaluating \eqref{eq:conv} with the given parameter set gives a lower bound on the iteration to convergence of $t \leq 31$. In simulating the model we see convergence within an error tolerance $\epsilon = 10^{-6}$ in 23 iterations, approximately 3\% of the total auction period with 99.90\% of the budget spent. For $k = 1$, we see the convergence in exactly one iteration as predicted with 99.90\% of the budget spent. Lastly, at $k=1.4$, we predict the time to convergence to be at most $t = 19$ and in simulation we see convergence in 18 iterations with 99.87\% of the budget spent. Thus, our theoretical results for the stable region are further validated in simulation. 
\subsection{Path to Chaos}
\label{sec.simul_chaos}
We here show the interesting dynamics that arise when $k \geq 2$. At $k=2$ we have the birth of 2-cycles in the dynamics, never converging on a singular fixed point but remaining oscillatory (as demonstrated in Section \ref{sec.opt_insta}). As we further increase the parameter, higher order cycles emerge until the system devolves into aperiodic behavior, more formally known as chaos \cite{strogatz:1994}. The results of this parameter tuning align with the bifurcations exhibited by the logistic map ($f(x) = rx(1-x)$) \cite{strogatz:1994}.

Figure 1 shows the dynamics of the average bidding on each iteration as we increase $k$. As expected, for $k < 2$, we have quick convergence to a stable average bidding strategy, but as we see for $k > 2$ periodic behavior emerges. We plot the behavior for $k=2.3$ and observe the oscillatory behavior from the period doubling bifurcations after $k = 2$. To further emphasize the bifurcation, we generate first and second iterate maps for our algorithm: these plots show the output of \textsc{Algorithm} \ref{alg:algorithm} after applying it once (first iterate) and the output of applying the algorithm twice (second iterate). We plot these two curves with the line $y=x$, or the line of fixed points. The points at which the first and second iterate maps intersect this line are thus the fixed points of the mappings. As we see in Figure 1, the first iterate map always has exactly one intersection with $y=x$, the general fixed point from Section \ref{sec.opt_gen}. We also see that the second iterate map always shares this intersection point however, once $k > 2$, the curve intersects $y=x$ three times (the three points found in the proof of Theorem \ref{thm:chaos}) aligning with the periodic behavior. 

\begin{figure}
	\center{\includegraphics[scale=0.35]{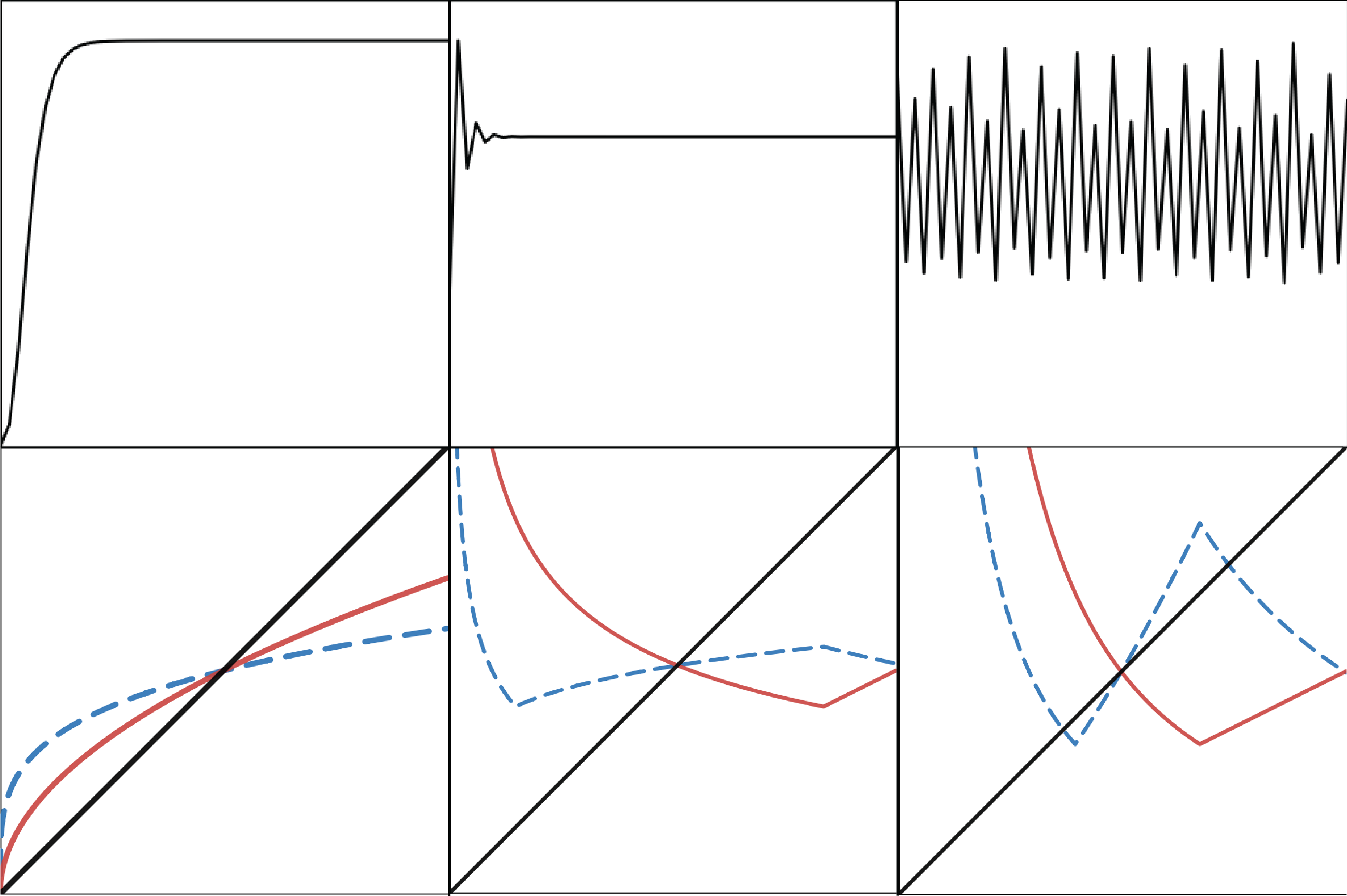}}
    \caption{\label{fig:sub2} Average bid value in the first 10\% of the total campaign time and iterate maps for three different $k$-values. Row 1 plots the average bid for each period of the campaign for $k = 0.5, 1.4$, and 2.3 respectively. The second row plots the corresponding first iterate maps (solid red line), second iterate map (dashed blue line) and $y=x$ (solid black line).}
\end{figure}
}

\section{Conclusion}
\label{sec.conc}
Bid optimization has become the central issue of budget pacing in online RTB problems. Our work has presented and analyzed a natural learning algorithm that ensures fast convergence times to optimal bids for uniform pacing of an advertiser's budget. The algorithm is additionally robust to fluctuations in the underlying mechanism of the display ad sales, in the form of a latent cost function. The efficiency of our algorithm is demonstrated both analytically and numerically through simulations on real data, demonstrating the feasibility of its implementation in real world systems. Additionally, we highlight that the algorithm does not rely on a synthesis of neural learning systems or more complex parameter estimation: it simply learns optimal behavior by comparing performance in the current iteration to the previous \cite{Perlich_2012}. 

\bibliographystyle{ACM-Reference-Format}
\bibliography{aamas}

\newpage
\appendix
\label{sec:appendix}
\onecolumn
\section{Omitted Proofs}
\subsection{Proof of Lemma \ref{lemma:bounded}} We proceed by bounding the cost function. For $0 < b < 1$, we bound the function using the minimal exponent
\begin{align*}
	c_1b^{k_m} < c_1b^{k_1} + c_2b^{k_2} + ... + c_mb^{k_m} < \tilde C b^{k_m}
\end{align*}
where $\tilde C := \sum_i^m |c_i|$ and identically for $b \geq 1$, we bound the function with the maximal exponent
\begin{align*}
	c_1b^{k_1} < c_1b^{k_1} + c_2b^{k_2} + ... + c_mb^{k_m} < \tilde C b^{k_1}
\end{align*}
For convenience, let $s(b) := c_1b^{k_1} + c_2b^{k_2} + ... + c_mb^{k_m}$ and without loss of generality assume $b \geq 1$. Using the above we demonstrate the following bounds on Algorithm 1 for the general cost function
\begin{align*}
	c_1b^{k_m} &< s(b) < \tilde C b^{k_m} \\
	\frac{1}{c_1b^{k_m}} &> \frac{1}{s(b)} > \frac{1}{\tilde C b^{k_m}} \\
	b \cdot \left(\frac{B}{Tc_1b^{k_m}} - \frac{1}{T} \right) &> b \cdot \left(\frac{B}{Ts(b)} - \frac{1}{T} \right) \\
	&> b \cdot \left(\frac{B}{T\tilde C b^{k_m}} - \frac{1}{T} \right)
\end{align*}
which is equivalent to the output of Algorithm 1 with $t = 0$. Thus, we have the algorithm bound
\begin{align*}
	&\text{Algorithm 1}(B,T,0,b_0,c_1b_0^{k_m}) \\
	&> \text{Algorithm 1}(B,T,0,b_0,s(b_0)) \\
	&> \text{Algorithm 1}(B,T,0,b_0,\tilde C b_0^{k_m})
\end{align*}
Lastly, manipulating the result from Banach in Lemma 3.5 we have the convergence time bound
\begin{align*}
	t \geq \ln\left(\frac{(1-L) |b_t - b^*|}{|b_1 - b_0|}\right) / \ln(L)
\end{align*}
substituting the output of Algorithm 1 for $b_1$, we thus define the convergence time bound for a cost function $c$ as
\begin{align*}
	 \tau_{s(b)} := \ln\left(\frac{(1-L) |b_t - b^*|}{|\text{Algorithm 1}(B,T,0,b_0,s(b_0)) - b_0|}\right) / \ln(L) \tag{A1.1} \label{eq:tau}
\end{align*}
Now combining with the algorithm bounds above gives 
\begin{align*}
	&\frac{(1-L) |b_t - b^*|}{|\text{Algorithm 1}(B,T,0,b_0,c_1b_0^{k_m}) - b_0|} \\
	&< \frac{(1-L) |b_t - b^*|}{|\text{Algorithm 1}(B,T,0,b_0,s(b_0)) - b_0|} \\
	&< \frac{(1-L) |b_t - b^*|}{|\text{Algorithm 1}(B,T,0,b_0,\tilde C b_0^{k_m}) - b_0|}
\end{align*}
which using definition \eqref{eq:tau} equates to
\begin{align*}
	L^{\tau_{c_1b^{k_m}}} < L^{\tau_{s}} < L^{\tau_{\tilde C b^{k_m}}} \Rightarrow \tau_{c_1b^{k_m}} < \tau_{s(b)} < \tau_{\tilde C b^{k_m}}
\end{align*}
bounding the general cost function's convergence time with monomials.

\subsection{Proof of Theorem \ref{thm:chaos}} A 2-cycle for a function $f(x)$ exists if and only if there are two points $x_1,x_2$ such that $f(x_1) = x_2$ and $f(x_2) = x_1$. Equivalently, these points must satisfy $f(f(x_1)) = x_1$ and $f(f(x_2)) = x_2$. We proceed by finding the points that satisfy the latter condition, in our context being the points which are fixed points when applying Algorithm 1 twice. Solving this relation amounts to the following problem
\begin{align*}
	b_1 &= \text{Algorithm 1}(B,T,0,b_0,\min\{cb_0^k,M\}) \\
	&= \frac{B - \min\{cb_0^k,M\}}{\min\{cb_0^k,M\} T} \cdot b_0 \\
	b_2 &= \text{Algorithm 1}(B,T,1,b_1,(\min\{cb_0^k,M\},\min\{cb_1^k,M\})) \\
	&= \frac{B - \min\{cb_0^k,M\} - \min\{cb_1^k,M\}}{\min\{cb_1^k,M\} (T-1)} \cdot b_1 \\
	= &\frac{B - \min\{cb_0^k,M\} - \min\{c\left(\frac{B - \min\{cb_0^k,M\}}{\min\{cb_0^k,M\} T} \cdot b_0\right)^k,M\}}{\min\{c\left(\frac{B - \min\{cb_0^k,M\}}{\min\{cb_0^k,M\} T} \cdot b_0\right)^k,M\} (T-1)} \\
	&\cdot \frac{B - \min\{cb_0^k,M\}}{\min\{cb_0^k,M\} T} \cdot b_0 \tag{A2.1} \label{eq:2point}
\end{align*}
We now consider three unique cases for the value of the minimizations:
\vspace{2mm} \\
\underline{Case 1:} $\min\{cb_0^k,M\} = cb_0^k, \min\{cb_1^k,M\} = cb_1^k$ \\
Setting \eqref{eq:2point} $= b_0$ and solving gives us
\begin{align*}
	b_0 = \left(\frac{B}{c(T+1)}\right)^{1/k}
\end{align*}
which is the original fixed point )\ref{eq:fpt}) from Section 3.3 that was proved to now be unstable for $k > 2$. As a result, the value of the bid will be repelled from this point to the two points of our cycle. It is important to note that any fixed point of a function $f$ satisfies $f(x) = x$ so we automatically have $f(f(x)) = f(x) = x$. We thus need not consider this trivial point in the 2 cycle, as we are interested in the situation where $f(f(x)) = x$ and $f(x) \neq x$.
\vspace{2mm} \\
\underline{Case 2:} $\min\{cb_0^k,M\} = cb_0^k, \min\{cb_1^k,M\} = M$ \\
Once again solving \eqref{eq:2point} $= b_0$ produces the following point
\begin{align*}
	b_0 = 2^{-1/k}(\sqrt{(-2B - MT^2 + MT + M)^2 + 4B(M-B)} \\
	+ 2B + MT^2 - MT - M)^{1/k}
\end{align*}
for which we need to verify $b_0 \in \mathbb{R}$. Real solutions exist for
\begin{gather*}
	\sqrt{(-2B - MT^2 + MT + M)^2 + 4B(M-B)} + 2B + ...\\
	 MT^2 - MT - M > 0 \\
	\Rightarrow (-2B - MT^2 + MT + M)^2 + 4B(M-B) \\
	 > (M + MT - 2B - MT^2)^2 \\
	M > B
\end{gather*}
however this last inequality on M does not hold. Therefore $b_0 \in \mathbb{C}$ in this case and thus this is not one of our period two points since our bid values are constrained to the real numbers.
\vspace{2mm} \\
\underline{Case 3:} $\min\{cb_0^k,M\} = M, \min\{cb_1^k,M\} = cb_1^k$ \\
Solving \eqref{eq:2point} $= b_0$ once again yields
\begin{align*}
	b_0 = \left( \frac{(B-M)^2}{c(B-M-MT+MT^2)}\right)^{1/k}
\end{align*}
for which we once again need to verify $b_0 \in \mathbb{R}$. It is clear that $(B-M)^2 > 0$, so we need only check
\begin{gather*}
	c(B - M - MT + MT^2) > 0 \\
	cM(T^2-T-1) > B \\
	M > \frac{B}{c(T^2-T-1)}
\end{gather*} 
For $T > 1$, $\frac{B}{cT} > \frac{B}{c(T^2 - T - 1)}$. Combining this with fact that $M > \frac{B}{cT}$, verifies the above inequality and thus $b_0 \in \mathbb{R}$. Therefore, this is one of the points in our 2-period cycle. Lastly, to find the second point of this cycle, we simply apply the algorithm again. Let $b_-$ denote the first point. We once again have two subcases which depend on the value of $c$. For $\min\{cb_-^k,M\} = cb_-^k$
\begin{align*}
	b_+ &= \text{Algorithm 1}(B,T,0,b_-,\min\{b_-^k,M\}) \\
	&= \frac{B - \min\{b_-^k,M\}}{\min\{b_-^k,M\}T} \cdot b_- \\
	&= \frac{B - b_-^k}{b_-^kT} \cdot b_- \\
	&= \frac{MT}{B-M}\left(\frac{(B-M)^2}{c(B-M-MT+MT^2)}\right)^{1/k} \\
	&= \frac{MT}{B-M} \cdot b_-
\end{align*}
and for $\min\{cb_-^k,M\} = M$
\begin{align*}
	b_+ = \frac{B - M}{MT} \cdot b_- \\
\end{align*}
Therefore we have our two cycle producing oscillatory behavior between $b_+$ and $b_-$.


\end{document}